\documentclass[authoryear,a4paper,11pt]{elsarticle_WP}

\usepackage{graphicx}
\usepackage{enumerate}
\usepackage{url}
\usepackage[tbtags]{amsmath}
\usepackage{amsfonts,amssymb,amsthm}
\usepackage{hyperref}
\usepackage{amssymb,amsmath}
\usepackage{bm}
\usepackage{bbm}
\usepackage[margin=1in]{geometry}
\usepackage{color}
\usepackage{natbib}
\usepackage{setspace}
\usepackage{booktabs}
\usepackage{array}
\usepackage[normalem]{ulem}
\usepackage{tikz}

\let\footnote=\endnote

\newcolumntype{x}[1]{>{\centering\arraybackslash\hspace{0pt}}p{#1}}

\theoremstyle{definition}
\newtheorem{definition}{Definition}

\journal{Computational Statistics and Data Analysis}


\newtheorem{theorem}{Theorem}[section]
\newtheorem{lemma}{Lemma}[section]
\newtheorem{corollary}{Corollary}[section]

\newcommand{\bD}{\mathbf{D}}
\newcommand{\bDelta}{\bs{\Delta}}
\newcommand{\bE}{\mathbf{E}}
\newcommand{\bX}{\mathbf{X}}
\newcommand{\bY}{\mathbf{Y}}

\newcommand{\zerob} {{\bf 0}}
\newcommand{\oneb} {{\bf 1}}
\newcommand{\expect} {\mathrm{E}}

\newcommand{\thetab} {{\boldsymbol{\theta}}}

\newcommand{\nub} {{\boldsymbol{\nub}}}

\newcommand{\xib} {{\boldsymbol{\xi}}}

\newcommand{\intd} {\textrm{d}}
\newcommand{\phib} {\boldsymbol{\phi}}

\newcommand{\etab} {\boldsymbol{\eta}}

\newcommand{\Amat} {\textbf{A}}
\newcommand{\Bmat} {\textbf{B}}
\newcommand{\Dmat} {\textbf{D}}

\newcommand{\Gmat} {\textbf{G}}
\newcommand{\Lmat} {\textbf{L}}
\newcommand{\Qmat} {\textbf{Q}}
\newcommand{\Rmat} {\textbf{R}}
\newcommand{\Smat} {\textbf{S}}

\newcommand{\Kmat} {\textbf{K}}
\newcommand{\Wmat} {\textbf{W}}

\newcommand{\Zmat} {\textbf{Z}}
\newcommand{\Xmat} {\textbf{X}}
\newcommand{\Xvec} {\mathbf{X}}

\newcommand{\Imat} {\textbf{I}}

\newcommand{\Pmat} {\textbf{P}}

\newcommand{\Vmat} {\textbf{V}}

\newcommand{\dvec} {\textbf{d}}

\newcommand{\xvec} {\textbf{x}}

\newcommand{\wvec} {\textbf{w}}
\newcommand{\vvec} {\textbf{v}}
\newcommand{\svec} {\textbf{s}}

\newcommand{\betab} {\boldsymbol {\beta}}

\renewcommand{\zerob}{\mathbf{0}}

\newcommand{\A}{\mathbf{A}}
\newcommand{\B}{\mathbf{B}}

\newcommand{\I}{\mathbf{I}}
\renewcommand{\P}{\mathbf{P}}
\newcommand{\R}{\mathbf{R}}
\newcommand{\Q}{\mathbf{Q}}
\renewcommand{\S}{\mathbf{S}}

\newcommand{\Yvec}{\mathbf{Y}}
\newcommand{\Wvec}{\mathbf{W}}

\newcommand{\Zvec}{\mathbf{Z}}
\newcommand{\epsilonb}{\boldsymbol{\varepsilon}}

\newcommand{\bI}{\mathbf{I}}
\newcommand{\bR}{\mathbf{R}}

\newcommand{\bB}{\mathbf{B}}
\newcommand{\bP}{\mathbf{P}}
\newcommand{\bQ}{\mathbf{Q}}

\newcommand{\var}{\mathrm{var}}

\newcommand{\tr}{\mathrm{tr}}

\newcommand{\Gau}{\mathrm{Gau}}
\newcommand{\chol}{\mathrm{chol}}

\newcommand{\bzero}{\boldsymbol{0}}

\newcommand{\plus}[1]{\mathop{+}\left\{ #1 \right\}}
\newcommand{\ctimes}[1]{\mathop{\times}\left\{ #1 \right\}}
\DeclareMathOperator{\diag}{diag}
\DeclareMathOperator{\ones}{ones}

\newcommand{\trans}{^{\scriptscriptstyle T}}

\newcommand{\ldef}{:=}
\newcommand{\bs}[1]{\boldsymbol{#1}}

\newcommand{\cE}{\mathcal{E}}
\newcommand{\cG}{\mathcal{G}}
\newcommand{\cL}{\mathcal{L}}
\newcommand{\cV}{\mathcal{V}}

\DeclareMathOperator*{\MFI}{MFI}

\let\originalleft\left
\let\originalright\right
\renewcommand{\left}{\mathopen{}\mathclose\bgroup\originalleft}
\renewcommand{\right}{\aftergroup\egroup\originalright}

\bibliographystyle{elsarticle-harv}

\begin{document}
	
	\begin{frontmatter}
		
		\title{A sparse linear algebra algorithm for fast computation of prediction variances with Gaussian Markov random fields\tnoteref{t1}}
                \tnotetext[t1]{Reproducible code available as Supplementary Material.}		
		
		\author[Wollongong]{Andrew Zammit-Mangion\corref{correspondingauthor}}
		\cortext[correspondingauthor]{Corresponding author}
		\ead{azm@uow.edu.au}
		\author[Bristol]{Jonathan Rougier}
		\ead{j.c.rougier@bristol.ac.uk}
		\address[Wollongong]{National Institute for Applied Statistics Research Australia~(NIASRA), School of Mathematics and Applied Statistics (SMAS), University of Wollongong, Northfields Avenue, Wollongong, NSW 2522, Australia}
		\address[Bristol]{School of Mathematics, University of Bristol, Tyndall Avenue, Bristol, BS8 1TH, UK}
		
		\begin{abstract}
			
			Gaussian Markov random fields are used in a large number of disciplines in machine vision and spatial statistics. The models take advantage of sparsity in matrices introduced through the Markov assumptions, and all operations in inference and prediction use sparse linear algebra operations that scale well with dimensionality. Yet, for very high-dimensional models, exact computation of predictive variances of linear combinations of variables is generally computationally prohibitive, and approximate methods (generally interpolation or conditional simulation) are typically used instead. A set of conditions are established under which the variances of linear combinations of random variables can be computed exactly using the Takahashi recursions. The ensuing computational simplification has wide applicability and may be used to enhance several software packages where model fitting is seated in a maximum-likelihood framework. The resulting algorithm is ideal for use in a variety of spatial statistical applications, including \emph{LatticeKrig} modelling, statistical downscaling, and fixed rank kriging. It can compute hundreds of thousands exact predictive variances of linear combinations on a standard desktop with ease, even when large spatial GMRF models are used.
			
			
		\end{abstract}
		
		\begin{keyword}
			conditional dependence; GMRF; lattice spatial model; sparse inverse subset; Takahashi equations
		\end{keyword}

	\end{frontmatter}

\section{Introduction} \label{sec:Intro}

Gaussian Markov random fields (GMRFs) play a pivotal role in various applications such as image analysis \citep{Mardia_1988}, disease mapping \citep{Lawson_2011}, and atmospheric pollution modelling \citep{Cameletti_2013}. They are frequently seen as reasonable approximations to continuously-indexed Gaussian processes \citep{Rue_2002}, and are often preferred due to their favourable computational properties. Recent work on their ability to approximate Gaussian processes typically used in geostatistical models \citep[e.g.,][]{Lindgren_2011,Nychka_2015} has led to their widespread use in the space-time analysis of data at scales that were inconceivable two decades ago \citep[e.g.,][]{Zammit_2015}.

Let $\etab$ have a non-degenerate multivariate Gaussian distribution with precision matrix $\Qmat$, and encode the pairwise conditional dependence properties of $\etab$ in the form of a graph $\cG_Q = \{\cV, \cE_Q\}$, where $\cV$ indexes the elements of $\etab$, and $(i,j) \not\in \cE_Q$ exactly when $\eta_i \perp \eta_j \mid \{\eta_k: k \ne i,j \}$, for $i \neq j$.  As is well-known \citep[see, e.g.,][Theorem~2.2]{Rue_2005}, this graph defines the zeros in $\Qmat$, for which $Q_{ij} = 0$ if and only if $i \not\sim j$ in $\cG_Q$, for $i \neq j$.  As the distribution of $\etab$ is non-degenerate, the positivity condition of the Hammersley-Clifford Theorem holds \citep{Besag_1974}, and $\cG_Q$ also encodes the local Markov property and the global Markov property of $\etab$.

In this article we consider the case when the GMRF is used to encode prior belief on the quantity $\etab$ through $\Qmat$, that may itself be a function of a small number of parameters that need to be estimated. We further assume that $\etab$ is not directly observed; instead, a linear combination of $\etab$, $\Bmat\etab$, is observed in the presence of noise. Denote the data vector as $\Zvec$. The two-level hierarchical model we consider is
\begin{align*}
\Zvec &= \Bmat\etab + \epsilonb,\\
\etab &\sim \Gau(\Xvec\betab,\Qmat^{-1}),
\end{align*}
where $\Xvec$ are covariates, $\betab$ are regression coefficients, and $\epsilonb$ is Gaussian, uncorrelated, measurement error with diagonal precision matrix $\Rmat$. 

It is an immediate result that if $\Xmat,\Bmat,\Qmat,\Rmat$ and $\betab$ are known, then the precision matrix of $\etab \mid \Zvec$ is $\Pmat \ldef \Bmat\trans\Rmat\Bmat + \Qmat$. It is also well known that $\var(\eta_i \mid \Zvec)$ can be easily found from the sparse Cholesky factor of $\Pmat$ using the Takahashi equations, without computing $\Smat \ldef \Pmat^{-1}$ directly \citep{Takahashi_1973,Erisman_1975,Rue_2007}. Frequently, however, we wish to compute prediction variances of \emph{linear combinations} of $\etab$, for example over sub-groups of variables, or over regions in a spatial domain in what is sometimes referred to as the change of support problem \citep{Wikle_2005}. This computation is always needed in the ubiquitous case when the spatial field is modelled as a sum of basis functions, and where a GMRF prior is placed on the basis-function coefficients. 

This article investigates the use of sparse linear algebra methods for the computation of the marginal variances of $\Amat\etab \mid \Zvec$, that is, $\dvec \ldef \diag(\Amat \Smat \Amat\trans)$, when $\Amat$ is nonnegative and when $\Pmat$ is such that its Cholesky factor can be computed. Specifically, it establishes the conditions on $\Amat$  under which $\dvec = \diag(\Amat \tilde\Smat \Amat\trans)$, where $\tilde{\Smat}$ is a sparse subset of $\Smat$ containing a sparsity pattern that is in general identical to that of the Cholesky factor of $\Pmat$, also a by-product of the Takahashi equations. We find that in several situations of practical importance, this computation simplification facilitates the evaluation of conditional variances over linear combinations where direct  computation is only possible in a massively parallel computing environment, and where conditional simulation, while feasible, is inaccurate when the number of simulations is limited to a reasonable value.

Sparse inverse subsets are frequently used to facilitate computation in estimation frameworks \citep[e.g.,][]{Gilmour_1995,Kiiveri_2012,Cseke_2016}. They are particularly useful for computing trace operations appearing in estimating equations of the form $\dvec\trans\oneb =  \tr(\Smat\Amat\trans\Amat)$. \citet{Bolin_2009} noted that if $\Amat = \Bmat$ then $\tilde\Smat$ necessarily contains the required elements to compute the trace, and thus replaced $\Smat$ with $\tilde{\Smat}$ when computing this trace operation in the M-step of an expectation-maximisation algorithm. \citet{Vanhatalo_2010} solved the related problem of computing $\tr(\Smat\Dmat)$ where $\Dmat$ has the same sparsity pattern as $\Smat^{-1}$, by replacing $\Smat$ with $\tilde{\Smat}$. In a similar vein, \citet{Grigorievskiy_2016} computed the block-diagonal inverse subset of $\Smat$ to find the trace when both $\Smat^{-1}$ and $\Dmat$ are block tridiagonal. In this article we instead focus on the computation of all of $\dvec$, which in a spatial context are the prediction error variances at different levels of spatial aggregation (as determined by $\Amat$).




Our main result is presented in Section \ref{sec:sparseinv} while a complexity analysis is given in Section \ref{sec:complexity}. In Section \ref{sec:results} we then develop the framework required for applying this result in a spatial-analysis setting, and demonstrate its use in several case studies. These studies consider conditional-autogressive models, \texttt{LatticeKrig} models, statistical downscaling, and spatial-random effects models. Section \ref{sec:conc} concludes with a brief mention of other approaches currently being investigated for when the sparse Cholesky factor is too large to compute.

\section{Main result} \label{sec:sparseinv}

Let $\A$ and $\B$ be  nonnegative matrices, and let $\Q$ and $\R$ be positive definite symmetric matrices, where the dimensions of all matrices are implicit in what follows. Define $\P \ldef \B\trans \R \B + \Q$; hence $\P$ is positive definite even if $\B$ is not full rank. Further, define $\S \ldef  \P^{-1}$.  Our objective is to compute the vector $\dvec \ldef\diag (\A \S \A\trans)$.  To summarise,
\begin{subequations}
	\begin{align}
	\P & := \B\trans \R \B + \Q, \\
	\S & := \P^{-1}, \\
	\dvec & := \diag(\A \S \A\trans) .
	\end{align}
\end{subequations}
This section presents a theorem relating $\dvec$ to the sparsity structure of $\P$ and $\A$.  

The following simple Lemma establishes a necessary and sufficient condition for $\dvec$ to be invariant to any specified element of $\S$, in terms of the elements of $\A$.

\begin{lemma}\label{lem:AA}
	The vector $\dvec$ is invariant to $S_{jk}$ if and only if $[\A\trans\A]_{jk} = 0$.
\end{lemma}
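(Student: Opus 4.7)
The plan is to expand $d_i = [\A\S\A\trans]_{ii}$ explicitly as a sum over the entries of $\S$, read off the coefficient of $S_{jk}$, and then use the nonnegativity of $\A$ to collapse an equality over all $i$ into a single scalar identity.

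First I would write
$$d_i = \sum_{j,k} A_{ij}\,A_{ik}\,S_{jk},$$
so that $\dvec$ is a linear function of the entries of $\S$. Since $\S$ is symmetric, the free variables are $\{S_{jk} : j \le k\}$, with $S_{kj}$ tied to $S_{jk}$. Collecting terms, the coefficient of $S_{jk}$ in $d_i$ is $A_{ij}^2$ when $j=k$ and $2 A_{ij} A_{ik}$ when $j\neq k$. Hence $\dvec$ is invariant to $S_{jk}$ (i.e.\ does not change when $S_{jk}$ and $S_{kj}$ are perturbed together by a common amount) if and only if $A_{ij} A_{ik} = 0$ for every $i$.

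Next I would convert this pointwise-in-$i$ statement into the stated matrix condition. Because $\A$ is nonnegative, every summand in $\sum_i A_{ij} A_{ik}$ is nonnegative, and therefore the sum vanishes if and only if each individual term vanishes. Thus
$$A_{ij} A_{ik} = 0 \text{ for all } i \quad\Longleftrightarrow\quad \sum_i A_{ij} A_{ik} = 0 \quad\Longleftrightarrow\quad [\A\trans\A]_{jk} = 0,$$
which gives both directions of the equivalence.

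The only subtle point is the symmetry constraint on $\S$: a naive differentiation that treats $S_{jk}$ and $S_{kj}$ as independent would give the ``if'' direction but the ``only if'' direction needs the observation that $\A\trans\A$ is symmetric, so $[\A\trans\A]_{jk}=0$ already subsumes $[\A\trans\A]_{kj}=0$, and the nonnegativity of $\A$ ensures the off-diagonal and diagonal cases collapse cleanly. I expect no other obstacle; the result is essentially a one-line consequence of expanding the quadratic form together with the nonnegativity hypothesis on $\A$.
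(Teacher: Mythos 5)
Your proposal is correct and follows essentially the same route as the paper: expand $d_i = \sum_{j,k} A_{ij}A_{ik}S_{jk}$, read off the coefficient of $S_{jk}$, and use the nonnegativity of $\A$ to conclude that $A_{ij}A_{ik}=0$ for all $i$ is equivalent to $[\A\trans\A]_{jk}=0$. Your additional remark about the symmetry of $\S$ (perturbing $S_{jk}$ and $S_{kj}$ together) is a slightly more careful reading of ``invariant to $S_{jk}$'' than the paper gives, but it does not change the argument.
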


\begin{proof}
	The $i$th element of $\dvec$ is
	\begin{displaymath}
	d_i = \sum_{j} \sum_{k} A_{ij} S_{jk} A_{ik} = \sum_{j} \sum_{k} (A_{ij} A_{ik}) S_{jk}.
	\end{displaymath}
	Hence $d_i$ is invariant to $S_{jk}$ if and only if $A_{ij} A_{ik} =  0$.  Therefore the entire vector $\dvec$ is invariant to $S_{jk}$ if and
	only if $A_{ij} A_{ik} = 0$ for all $i$, or, because $\A$ is nonnegative, $[\A\trans \A]_{jk} = 0$.
\end{proof}

Next we introduce a matrix-valued function whose purpose is to analyse computational sparsity.

\begin{definition}[`Ones' function]\label{def:ones}
	Let $\bD, \bD_1, \bD_2, \dots, \bD_k$ be a set of matrices of equal size, and let
	$\plus{\bD_1, \dots, \bD_k}$ denote the computed sum $\bD_1 + \cdots +
	\bD_k$ and $\ctimes{\bD_1,\cdots,\bD_k}$ denote the computed product $\bD_1\cdots\bD_k$.  Then
	\begin{subequations}
		\begin{align}
		\ones(\bD)_{ij}
		& \ldef \begin{cases}
		0 & D_{ij} = 0 \\
		1 & D_{ij} \neq 0,
		\end{cases} \label{eq:ones1} \\
		\intertext{and}
		\ones(\plus{\bD_1, \dots, \bD_k})
		& \ldef \ones( \ones(\bD_1) + \dots + \ones(\bD_k)), \label{eq:ones2} \\
                \ones(\ctimes{\bD_1, \dots, \bD_k})
		& \ldef \ones( \ones(\bD_1) \cdots \ones(\bD_k)) . \label{eq:ones3}
		\end{align} 
	\end{subequations}
\noindent We refer to $\ones(\bD)$ as the \emph{sparsity pattern} of $\bD$.
\end{definition}


Notice the need to clarify the difference between a \emph{computed} operation and an \emph{algebraic} operation.  The `ones' function processes a computed operation and returns a $1$ for each element of the result that needs to be evaluated and stored.  This definition ensures that zeros in an algebraic operation, which happen to arise though a lucky combination of nonzero elements cancelling out, are represented as $1$'s in the sparsity pattern of a computed operation.  Elements which are zero in the sparsity pattern of a computed operation are known as \emph{structural zeros} while elements that are zero in the sparsity pattern of an algebraic operation but not in that of a computed operation are known as \emph{algebraic zeros}. Below are some useful properties of the `ones' function.

\begin{lemma}\label{res:ones}
	Interpreting all binary relations elementwise, for any matrices $\bD,\bE,\bX,\bY$ with compatible dimensions,
	\begin{enumerate}[A.]
		
		\item If $c \neq 0$, $\ones(c \bD) = \ones(\bD)$.
		
		\item $\ones(\plus{\bD, \bzero}) = \ones(\bD)$ and $\ones(\plus{\bD, \bE}) \geq
		\ones(\bD)$.
		
		\item If $\ones(\bX) \geq \ones(\bY)$, then
		\begin{enumerate}[i.]
			
			\item $\ones(\plus{\bX, \bD}) \geq \ones(\plus{\bY, \bD})$
			
			\item $\ones(\ctimes{\bD,\bX,\bE}) \geq \ones(\ctimes{\bD,\bY,\bE})$.
			
		\end{enumerate}
                \item If $\bX$ and $\bY$ are nonnegative, then $\ones(\bX\trans\bY) = \ones(\ctimes{\bX\trans,\bY}).$
	\end{enumerate} 
\end{lemma}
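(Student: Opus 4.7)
The plan is to prove each item directly from Definition \ref{def:ones} by unpacking the distinction between computed and algebraic operations, and exploiting the monotonicity of the scalar rule in \eqref{eq:ones1} (zero maps to zero; any nonzero scalar maps to one). The guiding observation throughout is that $\ones$ applied to a computed sum or product first replaces each operand by its $0/1$ sparsity pattern and only then collapses entrywise, so cancellation among signed nonzero contributions is invisible to $\ones$.

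Parts A and B will follow immediately. For A, scaling by $c \neq 0$ does not alter which entries of $\bD$ are zero, so the two sparsity patterns coincide entrywise. For B, substituting $\bE = \bzero$ into \eqref{eq:ones2} and using $\ones(\bzero) = \bzero$ gives $\ones(\plus{\bD,\bzero}) = \ones(\bD)$; the inequality $\ones(\plus{\bD,\bE}) \geq \ones(\bD)$ holds because $\ones(\bD) + \ones(\bE) \geq \ones(\bD)$ entrywise (both summands are $0/1$ matrices), and the scalar rule is monotone in the sense that $m \geq n \geq 0$ implies $\ones(m) \geq \ones(n)$.

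For part C, both subclaims amount to lifting this scalar monotonicity through the bracket operations. For (i), $\ones(\bX) + \ones(\bD) \geq \ones(\bY) + \ones(\bD)$ entrywise, and one more application of the scalar rule preserves the inequality. For (ii), because $\ones(\bD)$ and $\ones(\bE)$ are nonnegative, multiplying on either side of $\ones(\bX)$ and $\ones(\bY)$ preserves the entrywise inequality $\ones(\bX) \geq \ones(\bY)$, so $\ones(\bD)\ones(\bX)\ones(\bE) \geq \ones(\bD)\ones(\bY)\ones(\bE)$ entrywise, and a final pass through the scalar rule yields the claim. The only bookkeeping required is tracking the definitions through $\ctimes{\bD,\bX,\bE}$.

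Part D is where nonnegativity is essential, and it is the step I expect to be the main (though still mild) obstacle. The algebraic entry $[\bX\trans\bY]_{ij} = \sum_k X_{ki} Y_{kj}$ is a sum of nonnegative terms and thus vanishes exactly when every term vanishes, i.e.\ when for each $k$ either $X_{ki}=0$ or $Y_{kj}=0$. The computed counterpart $[\ones(\bX\trans)\ones(\bY)]_{ij} = \sum_k \ones(X_{ki})\ones(Y_{kj})$ is a sum of $0/1$ entries that vanishes under the identical condition. Applying the scalar rule to both sides shows that the two sparsity patterns agree, which is the claim. Without nonnegativity this argument would fail, since signed cancellation could create algebraic zeros that are not structural zeros of the computed product.
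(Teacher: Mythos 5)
Your proposal is correct and follows essentially the same route as the paper's proof: each part is reduced to the monotonicity of the scalar $\ones$ rule on nonnegative quantities, lifted through entrywise sums and products of $0/1$ matrices, and your part D argument (nonnegativity prevents cancellation, so the algebraic and computed products share structural zeros) is exactly the paper's. The only difference is cosmetic — the paper spells out the double sums in (Cii) and introduces an auxiliary $\{0,1\}$-matrix in (Ci), whereas you invoke the entrywise ordering directly.
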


\begin{proof}
	(A) and (B) are straightforward.  For (Ci),
	\begin{align*}
	\ones(\plus{\bX, \bD})
	& = \ones(\ones(\bX) + \ones(\bD)) \\
	& = \ones(\ones(\bY) + \bDelta + \ones(\bD)) \\
	& \geq \ones(\ones(\bY) + \ones(\bD)) \\
	& = \ones(\plus{\bY, \bD}) ,
	\end{align*}
	where $\bDelta$ is some matrix with $\Delta_{ij} \in \{0, 1\}$.
	For (Cii),
	\begin{align*}
	\ones(\ctimes{\bD,\bX,\bE})_{ij}
        & = \ones(\ones(\bD)\ones(\bX)\ones(\bE))_{ij} \\
	& = \ones \left( \sum_k \sum_l \ones(\bD)_{ik} \ones(\bX)_{kl} \ones(\bE)_{lj} \right) \\
	& = \ones \left( \sum_{k : D_{ik} \neq 0} \, \sum_{l : E_{lj} \neq 0}  \ones(\bD)_{ik} \ones(\bX)_{kl} \ones(\bE)_{lj} \right) \\
	& = \ones \left( \sum_{k : D_{ik} \neq 0} \, \sum_{l : E_{lj} \neq 0}  \ones(\bX_{kl}) \right) \\
	& \geq \ones \left( \sum_{k : D_{ik} \neq 0} \, \sum_{l : E_{lj} \neq 0} \ones(\bY_{kl}) \right) \\
	& = \ones(\ctimes{\bD,\bY,\bE})_{ij},
	\end{align*}
	after reversing the steps. For (D), it follows from the nonnegativity of $\bX$ and $\bY$ that $(\bX\trans \bY)_{ij} = 0 	\Longleftrightarrow  X_{ki}Y_{kj} = 0$ for all k.  This holds if and only if $\sum_k \ones(\bX)_{ki} \ones(\bY)_{kj} = 0$, or, equivalently, $\ones(\ctimes{\bX\trans, \bY})_{ij} = 0$.

\end{proof}

Now, let  
$$
\Pmat^c := 	\plus{\ctimes{\bB\trans, \bR, \bB}, \bQ},
$$
be the \emph{computed} version of $\Pmat$. Recall from Definition \ref{def:ones} that although $\Pmat$ and $\Pmat^c$ are identical algebraically, $\ones(\Pmat^c) \ne \ones(\Pmat)$ in general. Also, $\ones(\Pmat^c) \ge \Imat$ since $\Qmat$ is positive-definite. Let $\cG_{P^c} := (\cV, \cE_{P^c})$ be the graph corresponding to the adjacency matrix $\ones(\Pmat^c) - \Imat$ and define
$$
F(i,j) := \{i+1,\dots,j-1,j+1,\dots\},\quad i < j \textrm{~and~} i,j\in\cV,
$$
to be the indices corresponding to the `future' of $i$ not including $j$. Following \citet[][Section 2.4.1]{Rue_2005}, 
construct a lower-triangular matrix $\Lmat^s$ such that
$$
L^s_{ji} = \begin{cases}
1 \quad \textrm{if~} (j = i) \textrm{~or~} (j > i ~\textrm{ and }~ \textrm{$F(i,j)$ separates $i$ and $j$ in $\cG_{P^c}$}),\\
0 \quad \textrm{otherwise.}
\end{cases}
$$
The matrix $\Lmat^s$ then contains ones at entries that are non-zero in general (i.e., structural non-zeros) in the Cholesky factor of \emph{any} positive definite matrix with sparsity pattern identical to $\ones(\Pmat^c)$.  We refer to the matrix $\Lmat^s$ as the \emph{symbolic} Cholesky factor of $\ones(\Pmat^c)$. If $\Lmat$ is the lower-triangular Cholesky factor of $\bP$, then $\Lmat^s \ge \ones(\Lmat)$.


We next present a key result, which provides conditions on $\Q$, $\R$, $\A$ and $\B$ under which invariance of $\dvec$ to $S_{jk}$ can be deduced from another matrix.

\begin{lemma}\label{lem:LLP}
	Let $\Lmat^s$ be the symbolic Cholesky factor of $\ones(\bP^c)$. If
	\begin{equation}\label{eq:ineq1}
	\ones(\plus{\ctimes{\bB\trans, \bR, \bB}, \bQ}) \geq \ones(\Amat\trans \Amat),
	\end{equation}
	then $\dvec$ is invariant to $S_{jk}$ when $L_{jk}^s = 0$ and $j \ge k$.
\end{lemma}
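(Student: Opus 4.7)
The plan is to prove the contrapositive: assume $[\A\trans\A]_{jk}\neq 0$ with $j\ge k$, and deduce $L^s_{jk}\neq 0$. Lemma \ref{lem:AA} will then deliver invariance of $\dvec$ to $S_{jk}$ from $L^s_{jk}=0$ immediately, which is the statement of the lemma. Breaking the argument into the diagonal and off-diagonal cases keeps things clean.

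For the case $j=k$ there is nothing to show because the diagonal entries of $\Lmat^s$ are $1$ by construction. For $j>k$, I would first use part (D) of Lemma \ref{res:ones} (applicable since $\A$ is nonnegative) to get $\ones(\A\trans\A)=\ones(\ctimes{\A\trans,\A})$, so that the hypothesis (\ref{eq:ineq1}) yields $\ones(\Pmat^c)_{jk}\ge \ones(\A\trans\A)_{jk}=1$. Since $j\neq k$, this off-diagonal structural nonzero of $\Pmat^c$ corresponds, via the definition $\ones(\Pmat^c)-\Imat$ of the adjacency matrix, to an edge $(j,k)\in\cE_{P^c}$.

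To finish, I would invoke the graph-theoretic definition of $\Lmat^s$. Because $j$ and $k$ are directly adjacent in $\cG_{P^c}$, no subset of $\cV\setminus\{j,k\}$ can separate them — any candidate separating set leaves the edge $(j,k)$ intact — and in particular the set $F(k,j)$ does not separate them. Reading the construction of $\Lmat^s$ (where $L^s_{jk}$ is flagged whenever the `future' fails to separate $k$ from $j$), this immediately gives $L^s_{jk}=1$, as required.

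The only real obstacle here is recognising that the structural nonzero pattern of the Cholesky factor automatically inherits every direct edge of the underlying graph, regardless of fill-in. Everything else is bookkeeping: the monotonicity properties in Lemma \ref{res:ones} let the sparsity hypothesis on $\A\trans\A$ pass to $\Pmat^c$, and Lemma \ref{lem:AA} translates the resulting combinatorial conclusion into the desired invariance of $\dvec$.
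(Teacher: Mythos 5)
Your proof is correct and follows essentially the same route as the paper's: both reduce the lemma to the chain $L^s_{jk} \ge \ones(\Pmat^c)_{jk} \ge \ones(\A\trans\A)_{jk}$ for $j \ge k$ and then invoke Lemma~\ref{lem:AA}. The only difference is cosmetic --- you derive the first inequality directly from the graph-separation definition of $\Lmat^s$ (adjacent vertices in $\cG_{P^c}$ cannot be separated by any set excluding them, so in particular not by $F(k,j)$), whereas the paper cites it as a known property of symbolic Cholesky factorisation (Theorem~4.2 of Davis, 2006); your contrapositive phrasing and the appeal to Lemma~\ref{res:ones}(D) are harmless but not needed, since the hypothesis is already stated in terms of $\ones(\A\trans\A)$.
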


\begin{proof}
	It is a straightforward result that non-zero elements in the lower-triangular part of $\ones(\bP^c)$ are also non-zero in $\Lmat^s$, that is, $L^s_{jk} \geq \ones(\bP^c)_{jk}$ when $j \geq k$ \citep[e.g.,][Thereom~4.2]{Davis_2006}.  Therefore, from \eqref{eq:ineq1},
	\begin{equation}\label{eq:Lsineq}
	L^s_{jk} \geq \ones(\bP^c)_{jk} \geq \ones(\Amat\trans \Amat)_{jk}, \qquad \textrm{when~} j \geq k.
	\end{equation}
	So if $L^s_{jk} = 0$ then $(\Amat\trans \Amat)_{jk} = 0,$ when $j \ge k$, in which case, by
	Lemma~\ref{lem:AA}, $\dvec$ is invariant to $S_{jk}$.  
	
\end{proof}

This leads us to the following corollary, on which Lemmas \ref{lem:Case1} and \ref{lem:Case2} found below are based.

\begin{corollary}\label{cor:conditions}
	If either
	\begin{enumerate}[i.]
		\item $\ones(\bB\trans\bB) \geq \ones(\Amat\trans \Amat)$, or
		\item $\ones(\bQ) \geq \ones(\Amat\trans\Amat)$,
	\end{enumerate}
	then $\dvec$ is invariant to $S_{jk}$ when $L_{jk}^s = 0$ and $j \ge k$.  As a special case of (ii), if $\ones(\Amat)$ is a permutation matrix, then $\dvec$ is invariant to $S_{jk}$ when $L_{jk}^s = 0$ and $j \ge k$.
\end{corollary}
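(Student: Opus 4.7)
My plan is to reduce both parts to the hypothesis of Lemma~\ref{lem:LLP}, namely the inequality $\ones(\plus{\ctimes{\bB\trans, \bR, \bB}, \bQ}) \geq \ones(\A\trans \A)$, and then to handle the permutation special case by reducing it to part~(ii).

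Part~(ii) is essentially immediate. Applying Lemma~\ref{res:ones}(B) with the roles of $\bD$ and $\bE$ taken by $\bQ$ and $\ctimes{\bB\trans, \bR, \bB}$ respectively, and using commutativity of the computed sum, gives $\ones(\plus{\ctimes{\bB\trans, \bR, \bB}, \bQ}) \geq \ones(\bQ) \geq \ones(\A\trans \A)$, so Lemma~\ref{lem:LLP} applies directly.

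Part~(i) requires a bit more care; this is the step I expect to be the main obstacle. The goal is to show $\ones(\ctimes{\bB\trans, \bR, \bB}) \geq \ones(\bB\trans\bB)$, and then combine with Lemma~\ref{res:ones}(B) and the hypothesis to chain up to $\ones(\A\trans \A)$. Because $\bB$ is nonnegative, Lemma~\ref{res:ones}(D) lets me replace $\ones(\bB\trans\bB)$ by $\ones(\ctimes{\bB\trans, \bB})$. Expanding the triple computed product,
\begin{equation*}
\ones(\ctimes{\bB\trans, \bR, \bB})_{ij} = \ones\!\left( \sum_{k,l} \ones(\bB)_{ki}\, \ones(\bR)_{kl}\, \ones(\bB)_{lj} \right),
\end{equation*}
and since $\bR$ is positive definite its diagonal entries are strictly positive, so $\ones(\bR)_{kk}=1$ for every $k$. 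Restricting the double sum to $l=k$ recovers exactly $\ones(\ctimes{\bB\trans, \bB})_{ij}$, which therefore cannot exceed $\ones(\ctimes{\bB\trans, \bR, \bB})_{ij}$. Lemma~\ref{res:ones}(B) then extends this to $\ones(\plus{\ctimes{\bB\trans, \bR, \bB}, \bQ}) \geq \ones(\bB\trans\bB) \geq \ones(\A\trans\A)$, and Lemma~\ref{lem:LLP} concludes the argument.

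For the special case, suppose $\ones(\A)$ is a permutation matrix. Using Lemma~\ref{res:ones}(D) once more together with the observation that a permutation matrix times its transpose, under either computed or algebraic multiplication, equals the identity, gives $\ones(\A\trans\A) = \I$. Since $\bQ$ is positive definite its diagonal is strictly positive, so $\ones(\bQ) \geq \I = \ones(\A\trans\A)$, and part~(ii) applies.
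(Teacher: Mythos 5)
Your proposal is correct and follows essentially the same route as the paper: both parts are reduced to the hypothesis of Lemma~\ref{lem:LLP} via the chain $\ones(\plus{\ctimes{\bB\trans,\bR,\bB},\bQ}) \geq \ones(\ctimes{\bB\trans,\bR,\bB}) \geq \ones(\bB\trans\bB)$ (resp.\ $\geq \ones(\bQ)$), and the permutation case is folded into (ii). The only difference is cosmetic: where the paper cites Lemma~\ref{res:ones}(Cii) with $\ones(\bR)\geq\ones(\I)$ to get $\ones(\ctimes{\bB\trans,\bR,\bB})\geq\ones(\ctimes{\bB\trans,\bB})$, you re-derive that step by restricting the double sum to $l=k$, which is just the proof of (Cii) specialised to this instance.
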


\begin{proof}\hspace{1ex}
	
	\begin{enumerate}[i.]
		
		\item Since $\bR$ is positive definite, $\ones(\bR) \geq \ones(\bI)$, which implies by Lemma~\ref{res:ones} (Cii) and (D) that $\ones(\ctimes{\bB\trans, \bR, \bB}) \geq \ones(\ctimes{\bB\trans,\bB}) = \ones(\bB\trans \bB)$ since $\bB$ is nonnegative. The condition $\ones(\bB\trans \bB) \geq \ones(\Amat\trans \Amat)$	then implies \eqref{eq:ineq1}, by Lemma~\ref{res:ones} (B).
		
		\item Follows directly from Lemma~\ref{res:ones} (B).
		
	\end{enumerate}
	If $\ones(\Amat)$ is a permutation matrix, then $\Amat$ has only a single nonzero element per row and a single nonzero element per column. Therefore $\Amat\trans \Amat$ is diagonal, so that $\ones(\bQ) \geq \ones(\Amat\trans \Amat)$, because $\bQ$ is positive definite.  
\end{proof}

We now present sufficient conditions for the two cases in Corollary \ref{cor:conditions} to hold. 

\paragraph{Case 1 $(\ones(\B\trans\B) \ge \ones(\A\trans\A))$} The conditions for when this is true are given by the following lemma.

\begin{lemma}\label{lem:Case1}
	If, for every $j,k$,
	\begin{equation}\label{eq:cor1}
	\exists i' : A_{i'j} A_{i'k} > 0 \implies \exists i : B_{ij} B_{ik} > 0,
	\end{equation}
	then $\ones(\bB\trans \bB) \geq \ones(\Amat\trans \Amat)$.
\end{lemma}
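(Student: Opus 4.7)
The plan is to unpack the two sparsity patterns entry by entry and observe that the hypothesis \eqref{eq:cor1} is exactly a pointwise statement of the desired matrix inequality, once the nonnegativity of $\A$ and $\B$ has been exploited to rule out cancellation in $\A\trans\A$ and $\B\trans\B$.

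First I would fix indices $j,k$ and write
\[
(\A\trans\A)_{jk} = \sum_{i'} A_{i'j} A_{i'k}, \qquad (\B\trans\B)_{jk} = \sum_i B_{ij} B_{ik}.
\]
Because $\A$ and $\B$ are nonnegative, each summand is nonnegative, so each of these sums vanishes if and only if every summand vanishes. Consequently $\ones(\A\trans\A)_{jk} = 1$ iff there exists some $i'$ with $A_{i'j} A_{i'k} > 0$, and analogously $\ones(\B\trans\B)_{jk} = 1$ iff there exists some $i$ with $B_{ij} B_{ik} > 0$. (This is essentially the content of Lemma~\ref{res:ones}(D) applied to $\A$ and to $\B$, so no cancellation of structural nonzeros can occur.)

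Next I would check the inequality $\ones(\B\trans\B)_{jk} \ge \ones(\A\trans\A)_{jk}$ in the two possible cases. If $\ones(\A\trans\A)_{jk} = 0$ the inequality is automatic, since $\ones(\B\trans\B)_{jk} \in \{0,1\}$. If $\ones(\A\trans\A)_{jk} = 1$, then by the characterisation above there exists $i'$ with $A_{i'j} A_{i'k} > 0$; hypothesis \eqref{eq:cor1} then yields an $i$ with $B_{ij} B_{ik} > 0$, which by the same characterisation forces $\ones(\B\trans\B)_{jk} = 1$. Since $j,k$ were arbitrary, this gives the elementwise inequality $\ones(\B\trans\B) \ge \ones(\A\trans\A)$, as required.

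There is essentially no hard step here: the proof is a one-line translation of the hypothesis once nonnegativity removes the distinction between algebraic and computed zeros in $\A\trans\A$ and $\B\trans\B$. The only thing worth being careful about is explicitly invoking nonnegativity (or Lemma~\ref{res:ones}(D)) at the start, so that the characterisation of $\ones(\A\trans\A)_{jk}$ and $\ones(\B\trans\B)_{jk}$ in terms of existence of a single positive summand is valid; without nonnegativity, positive and negative products could cancel and the equivalence would fail.
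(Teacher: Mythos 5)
Your proof is correct and follows essentially the same route as the paper's: both arguments rely on the nonnegativity of $\A$ and $\B$ to characterise $\ones(\A\trans\A)_{jk}$ and $\ones(\B\trans\B)_{jk}$ via the existence of a single positive summand, and then read off the inequality directly from hypothesis \eqref{eq:cor1}. Your version merely makes explicit the two-case check that the paper compresses into a single displayed chain of (in)equalities.
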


\begin{proof}
	Recall that $\Amat$ and $\Bmat$ are nonnegative. Equation \eqref{eq:cor1} is then the condition under which the inequality below holds:
	\begin{align*}
	\ones(\bB\trans \bB)_{jk}
	& = \ones \left( \sum_i B_{ij} B_{ik} \right) \\
	& \geq \ones \left( \sum_{i'} A_{i'j} A_{i'k} \right) \\
	& = \ones(\Amat\trans \Amat)_{jk} . \qedhere
	\end{align*} 
\end{proof}

\paragraph{Case 2 $(\ones(\Q) \ge \ones(\A\trans\A))$} The conditions for when this is true are given by the following lemma.

\begin{lemma}\label{lem:Case2}
    If, for every $j,k$,
	\begin{equation}\label{eq:cor2}
	\exists i : A_{ij} A_{ik} > 0 \implies Q_{jk} \neq 0,
	\end{equation}
	then $\ones(\bQ) \geq \ones(\Amat\trans \Amat)$.
\end{lemma}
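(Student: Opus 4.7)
The plan is to mirror the argument of Lemma \ref{lem:Case1}, the proof being even simpler here because the right-hand side of the desired inequality is already a single matrix $\bQ$ rather than a computed matrix product. The core step is to characterise the entries of $\ones(\Amat\trans\Amat)$ in terms of the individual entries of $\Amat$, and then translate the pointwise hypothesis \eqref{eq:cor2} into the required matrix-sparsity inequality.

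First I would observe that, because $\Amat$ is nonnegative, the sum $(\Amat\trans\Amat)_{jk} = \sum_i A_{ij} A_{ik}$ consists entirely of nonnegative summands. Hence no algebraic cancellation is possible, and Definition \ref{def:ones} gives the clean equivalence
$$
\ones(\Amat\trans\Amat)_{jk} = 1 \iff \exists\, i : A_{ij} A_{ik} > 0.
$$
This is the only non-trivial ingredient, and it follows directly from nonnegativity.

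Next I would invoke hypothesis \eqref{eq:cor2}: whenever such an $i$ exists, $Q_{jk} \neq 0$, so $\ones(\bQ)_{jk} = 1$. The case $\ones(\Amat\trans\Amat)_{jk} = 0$ is vacuous for the inequality. Assembling these two observations yields $\ones(\bQ)_{jk} \geq \ones(\Amat\trans\Amat)_{jk}$ elementwise, which is the conclusion.

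There is essentially no obstacle: the lemma is a direct restatement of the pointwise hypothesis in matrix-sparsity form, and unlike Lemma \ref{lem:Case1} it does not require any of the more elaborate properties of $\ones(\cdot)$ collected in Lemma \ref{res:ones}. The only care needed is to make explicit the appeal to nonnegativity of $\Amat$ when passing from the componentwise statement to the sparsity-pattern inequality.
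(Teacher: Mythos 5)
Your argument is correct and is essentially the paper's own proof: both reduce to the observation that, by nonnegativity of $\Amat$, $\ones(\Amat\trans\Amat)_{jk}=1$ exactly when some $A_{ij}A_{ik}>0$, whereupon \eqref{eq:cor2} forces $\ones(\bQ)_{jk}=1$. You merely spell out the no-cancellation step that the paper leaves implicit.
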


\begin{proof}
	Recall that $\Amat$ is nonnegative. Equation \eqref{eq:cor2} is the condition under which the inequality below holds: 
	\begin{align*}
	\ones(\Amat\trans \Amat)_{jk}
	& = \ones\left( \sum_i A_{ij} A_{ik} \right) \\
	& \leq \ones( Q_{jk} ) \\
	& = \ones(\bQ)_{jk} . \qedhere
	\end{align*}
\end{proof}

Combining Lemmas \ref{lem:Case1} and \ref{lem:Case2} with Corollary \ref{cor:conditions} we obtain the following theorem.

\begin{theorem}\label{thm:LLP}
	If, for every $j,k$,
	\begin{equation}\label{eq:cor3}
	\exists i' : A_{i'j} A_{i'k} > 0 \implies \exists i : B_{ij} B_{ik} > 0 \textrm{~or~} Q_{jk} \ne 0,
	\end{equation}
	then $\dvec$ is invariant to $S_{jk}$ when $L_{jk}^s = 0$ and $j \ge k$.
\end{theorem}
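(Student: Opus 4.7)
The plan is to show that the combined hypothesis \eqref{eq:cor3} implies the premise \eqref{eq:ineq1} of Lemma~\ref{lem:LLP}, which then delivers the conclusion directly. The subtlety is that Lemmas~\ref{lem:Case1} and \ref{lem:Case2} each impose a \emph{global} condition valid for all index pairs, whereas \eqref{eq:cor3} only requires that at each pair $(j,k)$, \emph{one or the other} alternative holds, and the choice may vary with $(j,k)$. So the argument cannot be a black-box invocation of either lemma; I would have to replay the pointwise reasoning that underlies them.

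First I would fix indices $j,k$ and aim to establish $\ones(\bP^c)_{jk} \geq \ones(\A\trans\A)_{jk}$. If $\ones(\A\trans\A)_{jk} = 0$ the inequality is trivial. Otherwise, nonnegativity of $\A$ together with $\ones(\A\trans\A)_{jk} = 1$ forces the existence of an $i'$ with $A_{i'j} A_{i'k} > 0$, and the hypothesis \eqref{eq:cor3} at this pair $(j,k)$ supplies one of two possibilities.

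In the first subcase, some $i$ satisfies $B_{ij} B_{ik} > 0$, so $\ones(\B\trans \B)_{jk} = 1$. Applying Lemma~\ref{res:ones} (D) (legitimate because $\B$ is nonnegative) together with $\ones(\R) \geq \ones(\I)$ (since $\R$ is positive definite) and Lemma~\ref{res:ones} (Cii), one obtains $\ones(\ctimes{\B\trans,\R,\B})_{jk} \geq \ones(\B\trans \B)_{jk} = 1$. Lemma~\ref{res:ones} (B) then yields $\ones(\bP^c)_{jk} = \ones(\plus{\ctimes{\B\trans,\R,\B},\Q})_{jk} \geq 1$. In the second subcase, $Q_{jk} \neq 0$, so $\ones(\Q)_{jk} = 1$, and Lemma~\ref{res:ones} (B) again gives $\ones(\bP^c)_{jk} \geq 1$. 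Either way, the pointwise inequality $\ones(\bP^c)_{jk} \geq \ones(\A\trans\A)_{jk}$ holds, which is \eqref{eq:ineq1}, and Lemma~\ref{lem:LLP} concludes the argument.

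The main obstacle, as noted above, is conceptual rather than technical: one must resist the temptation to read \eqref{eq:cor3} as the union of the premises of Lemmas~\ref{lem:Case1} and \ref{lem:Case2} taken as global statements, and instead recognise it as a pointwise disjunction. Once this is clear, the proof reduces to a two-case split at each $(j,k)$, with each branch disposed of by the appropriate property of the ones function.
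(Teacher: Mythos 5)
Your proof is correct and follows the same route as the paper: everything is funnelled through inequality \eqref{eq:ineq1} and Lemma~\ref{lem:LLP}. Your observation about the pointwise nature of \eqref{eq:cor3} is well taken---the paper offers no displayed proof, only the remark that the theorem follows by combining Lemmas~\ref{lem:Case1} and~\ref{lem:Case2} with Corollary~\ref{cor:conditions}, which read literally would require one of the two global dominance conditions to hold for all index pairs; your elementwise two-case verification of \eqref{eq:ineq1} is exactly the argument needed to make that combination rigorous.
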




\subsection*{Remarks}


Suppose, as detailed in Section \ref{sec:Intro}, that we wish to compute the conditional variances of linear combinations of $\etab$, that is, $\dvec \ldef \diag(\var(\Amat \etab | \Zvec))$. By Theorem~\ref{thm:LLP}, we see that in some cases we only need those $S_{jk}$ for which $L_{jk}^s \ne 0$ to compute $\dvec$. However, the conditions in Theorem \ref{thm:LLP}, in all their generality, are unwieldy, and in practice it is easier to ensure that the conditions of Corollary \ref{cor:conditions} hold. These conditions, although still sufficient, are more restrictive than those in Theorem \ref{thm:LLP}. Two sufficient conditions are given:

\begin{enumerate}
	
	\item  First, \eqref{eq:cor1} provides a sufficient condition for $\ones(\bB\trans \bB) \geq \ones(\Amat\trans \Amat)$.  It states that every pair of nonzero elements that appear in a row of $\Amat$ must also be nonzero in a row of $\Bmat$.  When $\etab$ corresponds to a GMRF over a tiling of a spatial domain, this condition holds when the observation footprints encoded in the rows of $\Bmat$ are such that they tile the domain of interest into large regions, and that the prediction footprints encoded in the rows of $\Amat$ are such that they tile the domain into smaller-footprint regions that are \emph{nested} inside the large-footprint regions. Satisfying this sufficient condition might be a fortuitous outcome of the experiment design, for example as shown in the case study of statistical downscaling in Section \ref{sec:downscaling}.
	
	
	\item Second, \eqref{eq:cor2} provides a sufficient condition for $\ones(\bQ) \geq \ones(\Amat\trans \Amat)$.  It states that every pair of vertices that appear in a row of $\Amat$ must correspond to an edge in $\cG_Q$, and must hence be a nonzero value in $\bQ$. Specifically, the nonzero vertices in each row of $\Amat$ must be a clique in $\cG_Q$. When $\etab$ corresponds to a GMRF over a tiling of a spatial domain, this sufficient condition will be satisfied if all predictions are for regions no larger than those implied by the neighbourhood structure of $\cG_Q$.  
	
\end{enumerate}

Recall from Corollary \ref{cor:conditions} that a special case of 2.~occurs when $\ones(\Amat)$ is a permutation matrix. Hence, it follows that care as to whether conditions of Theorem \ref{thm:LLP} are satisfied is only needed when we want to predict over linear combinations of $\etab$. If neither 1.~or 2.~are satisfied, the analyst can intervene by treating elements that are zero in the sparse matrices as being structurally non-zero. For the second case, this means adding cliques to $\cG_Q$ by forcing selected elements in $\ones(\bQ)$ to be non-zero.  Forcing $\ones(Q_{jk})$ to be non-zero (even though $Q_{jk} = 0$) ensures that $L^s_{jk} \neq 0,$ and hence that the inequality \eqref{eq:Lsineq} is satisfied. 
Whether this is a good strategy for computation will depend on how many elements need to be coerced, a point we explore further in the next section.

The $\ones$ function and the smybolic Cholesky factor are essential for the result of Theorem \ref{thm:LLP} to hold. This means that, when producing software, one must ensure that elements that are structurally non-zero are contained within the matrix objects, whether or not these are algebraically computed to be zero or not. No two software packages are the same, and while the \texttt{chol} function in the \texttt{Matrix} package in \texttt{R} \citep{R} retains the structural non-zeros when computing the Cholesky factor, the \texttt{chol} function in \texttt{MATLAB} does not. In \texttt{MATLAB}, the function \texttt{symbfact} can be used to compute $\Lmat^s$.

Finally, the conditions of Lemma \ref{lem:LLP} and Theorem \ref{thm:LLP} dictate when $\dvec \ldef \diag(\A \S \A\trans) = \diag(\A \tilde\S \A\trans)$, where $\tilde\S$ is a sparse inverse subset of $\P$, given by
$$
\tilde{S}_{jk} = \tilde S_{kj}  =\begin{cases}
S_{jk} & L^s_{jk} \ne 0,\quad j\ge k \\
0 & \textrm{otherwise}.    
\end{cases}
$$
Therefore, these results are only useful because there is an efficient way of computing the sparse inverse subset $\tilde{\S}$, that is, the elements $S_{jk}$ for which $L^s_{jk}\ne 0$. This computation is based on the Takahashi equations, which we discuss in more detail in the next section. 

\section{Computational Complexity} \label{sec:complexity}

In the following discussion we assume that $\P$ has been permuted such that the bandwidth of its lower-triangular Cholesky factor, $\Lmat$, has been minimised \citep[][Section 2.4.2]{Rue_2005}; focussing on the banded case renders the complexity analysis straightforward, but our experience so far has been that one can draw similar conclusions when other fill-in reducing permutations are used instead. Permutation does not affect our conclusions in Section \ref{sec:sparseinv}, which do not depend on any specific ordering on $\cV$. We denote the bandwidth of $\Lmat$ as $b$, let $\A$ be of size $N \times n$, and let $\Pmat$ be of size $n \times n$.

Conventionally, the vector $\dvec$ is computed directly from $\Lmat$ and $\Amat$ by first solving $\Lmat\Gmat = \A\trans$, and then computing $\dvec = (\Gmat\trans \circ \Gmat\trans)\oneb$, where $\circ$ is the Hadamard (elementwise) product. The number of multiplicative operations required to compute the forwardsolve is at most $N$ times the number of nonzero elements in $\Lmat$;  if the band of $\Lmat$ is full, then $N(b+1)(n - b/2)$ operations are required \citep[][Lemma 2.2.1]{George_1981}. In practice the number of operations will be less when the sparsity structure of $\Gmat$ is taken into account before computation. The number of multiplicative operations for the Hadamard product and column-summation is at most $Nn$. Therefore, the total number of multiplicative operations required, given $\Lmat$,  is at most
$$
N(b+1)(n - b/2) + Nn,
$$
which is $O(Nbn)$. 

Now, assume that the sparse inverse subset of $\P$, denoted as $\tilde\S$, is known. Under the conditions of Theorem \ref{thm:LLP}, the vector $\dvec$ can be found efficiently by computing $(\A \circ (\A\tilde\S))\oneb$. If $\Lmat$ has bandwidth $b$, then so does $\tilde\S$, so that the number of operations required for the inner matrix multiplication is $N(b+1)n_\#$, where $n_\#$ is the number of nonzeros in each row of $\A$ (assuming, for simplicity, that this is a constant). The Hadamard product requires at most $Nn_\#$ multiplicative operations, so that the total number of multiplicative operations required when using $\tilde\S$ to compute $\dvec$ is at most
$$
N(b+1)n_\# + Nn_\#,
$$
which is $O(Nbn_\#)$. Therefore, if $\tilde\S$ is known, using it to find the predictive variances will always be faster than the direct method, and frequently dramatically so.

It therefore now remains to analyse the cost of computing the Takahashi equations \citep{Takahashi_1973} required for computing $\tilde\S$. These equations are reproduced below for completeness \citep[see also][Equation 7]{Rue_2009},
\begin{align*}
\textrm{off-diagonal terms:} & \quad \tilde S_{ij} = \tilde S_{ji}  = -\frac{1}{L_{ii}}\sum_{k = i+1}^nL_{ki}\tilde S_{kj},~~~ j > i,~~ i = n,\dots,1, \\ 
\textrm{diagonal terms:} &~~~~~~~~ \quad\tilde S_{ii} = \frac{1}{L_{ii}^2}- \frac{1}{L_{ii}}\sum_{k = i+1}^nL_{ki}\tilde S_{ki},~~ i = n,\dots,1.
\end{align*}
Each Takahashi recursion requires at most $b+1$ multiplicative operations ($b+2$ when computing $\tilde{S}_{ii}$, which we ignore for convenience). The total number of multiplicative operations required is then the product of the total number of elements that need to be computed (which in this case equals the number of nonzero elements in $\Lmat$) by $(b+1)$,
$$
(b+1)^2(n - b/2),
$$
which is $O(b^2n)$. 

Both the direct approach and the sparse inverse subset approach require computation of $\Lmat$ which is also $O(b^2n)$ \citep[][Theorem 2.1.2]{George_1981}. We thus have the following two algorithmic costs for when $\Lmat$ is banded:
\begin{align*}
\textrm{Direct cost :} &\quad O(\max(Nbn, b^2n)), \\
\textrm{Sparse inverse subset approach cost:}& \quad O(\max(Nbn_\#, b^2n)).
\end{align*}

\noindent Therefore, even after taking the Takahashi equations into account, we see that use of $\tilde\S$ is still very attractive when $n_\# \ll n$. This corresponds to cases when predictions are required at points or over small areas. For images usually $n_\# = 1$ pixel, while in two-dimensional spatial applications $n_\#$ is usually less than 10.

For both methods it is critical that $b$ is `not too large,' but even more so when using the sparse inverse subset, in which case the additional time required to compute the Takahashi equations has a deleterious effect on performance. In practice, a large bandwidth could be the result of having (i) large prediction regions, (ii) a high-order Markov assumption when constructing $\Q$, (iii) a measurement error dependence structure so that $\R$ is not sparse, and (iv) measurements with large observation footprints, so that $\B$ is relatively dense. Deliberately forcing elements in $\ones(\Qmat)$ to be non-zero so that Corollary \ref{cor:conditions} (ii) is satisfied (see \emph{Remarks} in Section \ref{sec:sparseinv}) will thus also have a negative impact.  We observed that (i) to (iv) are important considerations even when other fill-in reducing permutations are used.


The above analysis considers the total number of operations, but many of these can be parallelised. In particular, in the direct approach one can brute-force computation of $\dvec$ by computing the columns of $\Gmat$ in parallel.  Once $\Lmat$ has been computed, the direct approach is then $O(Nbn/\alpha)$, where $\alpha$ is the number of parallel processing units available. When $\alpha$ is high, direct computation becomes once again preferable to use of the sparse inverse subset, which remains $O(b^2n)$ because of the sequentially-computed Takahashi equations. Achieving a high $\alpha$ is possible with judicious use of graphical processing units, which we do not consider in this article.

\section{Studies on applications in spatial statistics}\label{sec:results}

In this section we apply the theory of Section \ref{sec:sparseinv} to reduced-rank modelling in spatial statistical applications, motivated in Section \ref{sec:spatial}. We then study the potential benefit of using the sparse inverse subset for computing the prediction variance in various contexts. First, we consider a second-order conditional autoregressive (CAR) model in one dimension to verify the computational benefit; second, we show its utility in improving prediction-error estimation in the package \texttt{LatticeKrig}; third, we show its use in the problem of statistical downscaling; finally, we show how it greatly simplifies the problem of prediction-error computation in a variant of fixed rank kriging that includes a conditional auto-regressive model to capture fine-scale variation. All reported timings are based on scripts written in \texttt{R} run on a platform using OpenBlas \citep{Wang_2013} and an Intel\textregistered~Core\texttrademark~i7-4712HQ 2.30 GHz processor using a single core (while noting that all methods described below can be parallelised to various degrees). The \texttt{SuiteSparse} libraries were used to implement fill-in reducing permutations and the Takahashi equations \citep{SuiteSparse}. Code and data for all analyses and results shown are available as supplementary material.

\subsection{Application to spatial statistics}\label{sec:spatial}

Spatial modelling and prediction approaches that employ finite-dimensional representations of infinite-dimensional stochastic processes generally fall into two classes: Covariance-centric \citep[e.g.,][]{Cressie_2008} and precision-centric \citep[e.g.,][]{Lindgren_2011} approaches. In this article we are concerned with the latter, where spatial dependence is characterised through the conditional dependence structure of either (i) the coefficients of basis functions used to decompose the field or (ii) the field averaged over spatial regions. We are further concerned with the ubiquitous case of when the measurement error is Gaussian and parameter inference is done within a maximum-likelihood framework \citep[e.g.,][]{Nychka_2015}. When employing a maximum-likelihood framework, all predictions following parameter fitting are also conditionally Gaussian when conditioned on the data, however their computation is not necessarily straightforward: Prediction might be needed at millions of locations when large datasets are used and fine-resolution maps are needed. 

The general setting is the following. Consider a spatial field $Y(\svec), \svec \in D \subset \mathbb{R}^d$. Then, under the assumption that the space of $Y$ is spanned by a set of known basis functions, its finite-dimensional representation is given by $$Y(\svec) = \xvec(\svec)\trans\betab + \phib(\svec)\trans\etab,\quad \svec \in D,$$ 
where $\xvec(\cdot)$ is a vector of $p$ specified regression functions, $\betab$ are regression coefficients, $\phib(\cdot)$ is a vector of $n$ basis functions that model spatial structure not captured by the regressors, and $\etab$ are the basis-function coefficients. Here we consider the case when $\etab \sim \Gau(\zerob,\Qmat^{-1})$ and $\Qmat$ is sparse, that is, when $\etab$ is a GMRF.

The process $Y(\cdot)$ is observed through some data $Z_k, ~k = 1,\dots,m$. Crucially, we assume that each $Z_k$ is a specified linear operator on $Y$, plus a measurement error, that is,
\begin{equation}
Z_k = \cL_k^O \cdot Y + \epsilon_k,\quad k = 1,\dots,m.
\end{equation}
Then, it follows that 
\begin{equation}
\Zvec = \Xmat^O\betab + \Bmat\etab + \epsilonb,
\end{equation}
where $X^O_{ki} \ldef \cL_k^O\cdot x_i$, $B_{ki} \ldef \cL_k^O \cdot \phi_i$. The vectors $\Zvec$ and $\epsilonb$ are the vectors containing the $m$ data and measurement errors, respectively.

The $N$ predictions $Y_j,~j = 1,\dots, N$, are also assumed to be specified (potentially different) linear operators on $Y$, that is,
\begin{equation}
Y_j = \cL_j^P \cdot Y,~j = 1,\dots,N.
\end{equation}
Then, it follows that 
\begin{equation}
\Yvec = \Xmat^P\betab + \Amat\etab,
\end{equation}
where $X^P_{ji} \ldef \cL_j^P \cdot x_i$, $A_{ji} \ldef \cL_j^P \cdot \phi_i$.  Hence prediction reduces to the computation of the prediction mean, $\hat{\Yvec} \ldef \Xvec^P\betab + \A  \expect(\etab \mid \Zvec)$, and the prediction variances
\begin{equation}\label{eq:YpZ}
\dvec \ldef \var(\Yvec \mid \Zmat) = \diag( \A \S \A\trans), 
\end{equation}
\noindent where
\begin{equation}\label{eq:S}
\S \ldef \var(\etab \mid \Zvec) = (\Bmat\trans\Rmat\Bmat + \Qmat)^{-1}.
\end{equation}

The computation of $\dvec$ can be challenging, and a common alternative is to simulate from $\etab \mid \Zvec$ and obtain prediction standard errors using sample variances. In Section \ref{sec:sparseinv} we derived the conditions under which we could used the sparse inverse subset $\tilde{\Smat}$ instead of $\Smat$ in \eqref{eq:YpZ} to compute $\dvec$, and in Section \ref{sec:complexity} discussed the respective computational properties. In the following sections we carry out simulation studies and show that using $\tilde{\Smat}$, where valid, is much more efficient than direct methods of computation and, sometimes, also conditional simulation.

\subsection{Simulation studies on computational efficiency}\label{sec:simstudy}

Assume $D = [0,1]$ and that the process $Y(s), s \in D,$ has zero expectation and can be decomposed into $n$ bisquare basis functions of the form
\begin{equation}\label{eq:bisquare1D}
\phi(s;s^c_i,r_n) \ldef \left\{\begin{array}{ll} \{1 - (|s - s^c_i|/r_n)^2\}^2; &| s  - s^c_i| \le r_n \\ 
0; & \textrm{otherwise}, \end{array} \right. 
\end{equation}
where $s^c_i, i = 1,\dots,n,$ are the function centroids which are equally spaced in $D$, and $r_n$ is the aperture, taken to depend on the number of basis functions $n$. Since $Y$ has zero expectation, $Y(\cdot) = \phib(\cdot)\trans\etab$, where $\etab \sim \Gau(\zerob,\Qmat^{-1})$. As a model for $\etab$ we let $\Qmat = \tau(\Imat - \rho\Wmat)$ with $\rho = 1/12$ and $\tau = 12$, where $\Wmat$ is the proximity matrix
$$
W_{ij} = \begin{cases} 
4 & |j-i| = 1 \\
1 & |j-i| = 2 \\
0 & \textrm{otherwise};
\end{cases}
$$
\noindent see \citet[Section 4.3]{Banerjee_2015} for further details. In this model, $\Qmat$ has at most $5$ nonzero elements per row (note that only the sparsity pattern for $\Qmat$ is important for this study, not the actual values).

Assume now that we have observed $Y(\cdot)$ at $m$ random locations chosen uniformly on $D$ with measurement-error variance 0.1 (hence $\R = 10\I$). Since the observations are point referenced, we take $\cL^O_k, k = 1,\dots,m,$ to be identity operators. The matrix $\Bmat$ is hence the $m \times n$ matrix containing evaluations of the basis functions at the observation locations. As prediction domain we use $N$ points equi-spaced in $[0,1]$. Since this implies that $\cL^P_j,j = 1,\dots,N$, are identity operators, the matrix $\A$ is just the $N \times n$ matrix containing evaluations of the basis functions at the prediction locations.

The bisquare basis-function centroids are equi-spaced in $[0,1]$ (see Figure~\ref{fig:exp_setup}), and we set $r_n = 1/n$ so that at any point in $D$ only one or two of the basis functions evaluate to 0. Then, $\Amat$ contains at most two nonzeroes per row, corresponding to two basis-function weights that are \emph{not} conditionally independent due to the use of the proximity matrix $\Wvec$ in constructing $\Qmat$. Condition \ref{eq:cor2} is satisfied and therefore, by Corollary \ref{cor:conditions}, (ii) we can use the sparse inverse subset to compute $\dvec$.

The aim of this experiment is to see how the computational time required to compute prediction variances varies when using different approaches for different $n$ and $N$. Specifically, we let $n = 10^2,\dots,10^5$ and $N = 10^1,\dots,10^5$. For each combination of $n$ and $N$ we evaluate the time needed to compute the predictive variances given in \eqref{eq:YpZ} and \eqref{eq:S}. Since processing times increase linearly with $m$ in low-rank methods, $m$ is not a factor we vary in the experiment, and fix it to $10^4$. 

We consider the following methods for computing the prediction variances:
\begin{enumerate}
	\item {\bf Direct (exact):} The `standard approach' obtains variances by
	\begin{enumerate}
		\item Finding the Cholesky factor $\Lmat = \chol(\Smat^{-1})$.
		\item Solving for $\Gmat$ in the forward substitution $\Lmat\Gmat = \A^{\trans}$.
		\item Evaluating row sums over a Hadamard product to find $\dvec = (\Gmat\trans\circ\Gmat\trans)\oneb$.
	\end{enumerate}
	\item {\bf Sparse inverse subset (exact):} The `sparse inverse subset' approach obtains variances by
	\begin{enumerate}
		\item Finding the Cholesky decomposition $\Lmat = \chol(\Smat^{-1})$.
		\item Using the Takahashi equations to compute the sparse inverse subset $\tilde\Smat$.
		\item Evaluating row sums over a Hadamard product to find $\dvec = (\A \circ (\A\tilde\S))\oneb$.
	\end{enumerate}
	\item {\bf Conditional simulation (approximate):} Conditional simulation can be used to obtain approximate prediction variances by 
	\begin{enumerate}
		\item Finding the Cholesky decomposition $\Lmat = \chol(\Smat^{-1})$.
		\item Solving by backwards substitution $\Lmat^{\trans} \vvec_i = \wvec_i, i = 1,\dots,M$, where $\wvec_i \sim \Gau(\zerob,\Imat)$ and we fix $M = 50$.
		\item Finding the empirical prediction variance from the simulations $\{\vvec_i: i = 1,\dots,M\}$.
	\end{enumerate}
\end{enumerate}
\noindent In all the above we use  permutations of $\Smat^{-1}$ in order to reduce the number of fill-ins when computing $\Lmat$.
\begin{figure}[!t]
	\begin{center}
		\includegraphics[width=0.9\textwidth]{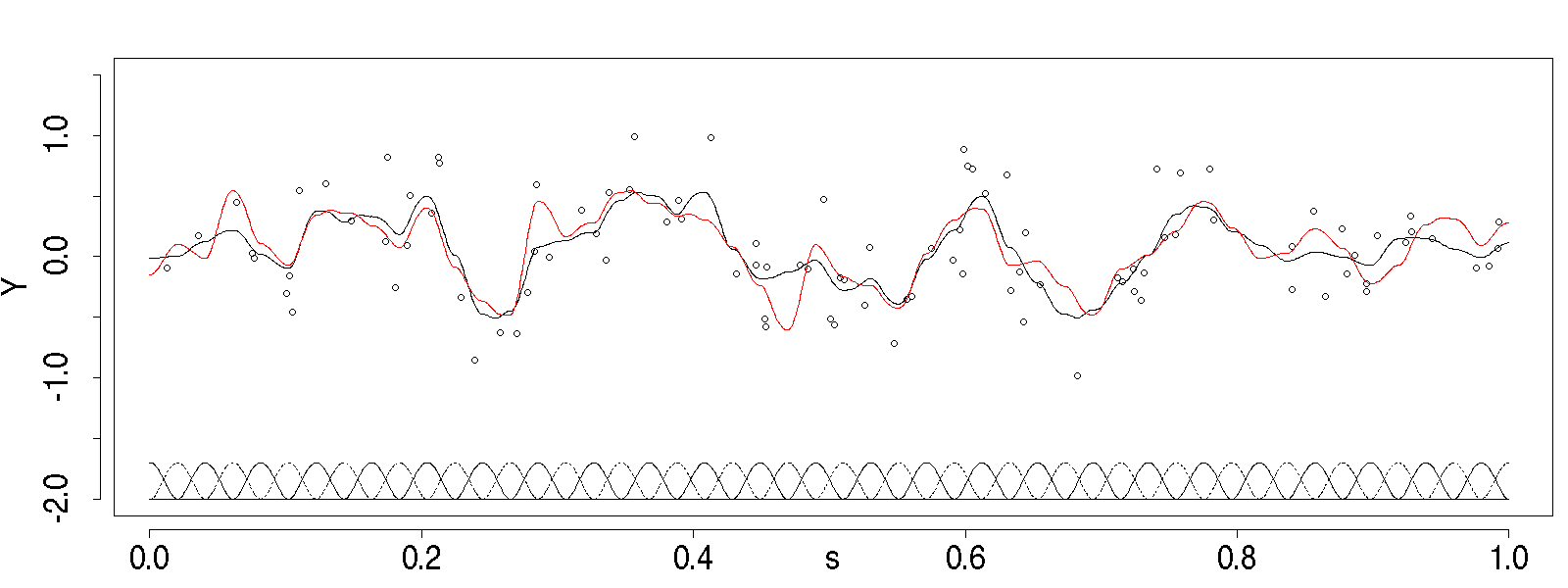}  
		\caption{A realisation of $Y(\cdot)$ (red line) is observed at random locations in the presence of noise (black circles). The data is modelled using a set of bisquare functions (shifted to $Y = -2$ in the figure) with a second-order CAR prior distribution on their coefficients. A Gaussian update gives the conditional mean (black).} \label{fig:exp_setup}
	\end{center}
\end{figure}

\begin{figure}[!t]
	\begin{center}
		\includegraphics[width=\textwidth]{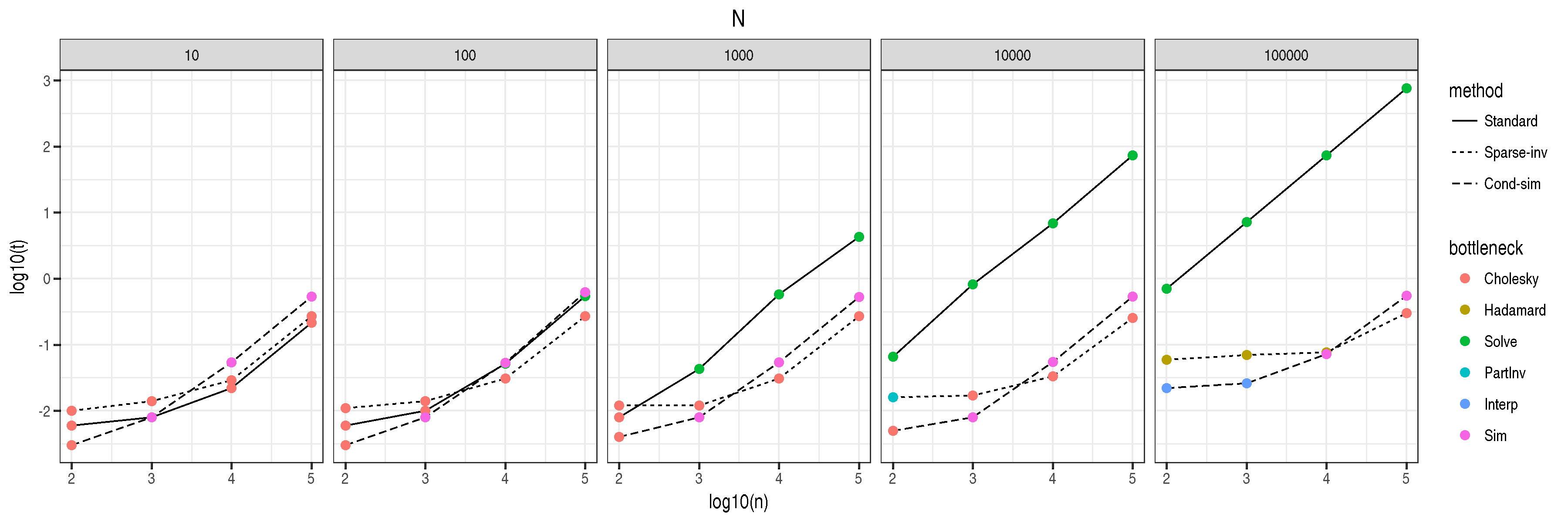}  
		\caption{Time needed in $\log_{10}$ seconds for computing $N$ prediction variances from an $n$-dimensional second-order CAR model observed with $m = 10000$ observations for the direct method (`Standard,' solid line),  the sparse inverse subset method (`Sparse-inv,' short-dashed line) and conditional simulation (`Cond-sim,' long-dashed line). The colour of the bullet indicates which operation was found to be the most computationally-intensive for a given $n$, $N$, and method; see text for details.} \label{fig:Sim1}
	\end{center}
\end{figure}

We compare the time required to compute the prediction variances as a function of $N$ and $n$ using these three methods in Figure~\ref{fig:Sim1}. The bullets denote $\log_{10}$ of the total time in seconds, while the colours denote the operation in the respective algorithm which required the most time to compute: `Cholesky,' the Cholesky decomposition including permutations (although none are needed in this case since $\Qmat$ is banded); `Hadamard,' the Hadamard product $\A \circ (\A\tilde\S)$ in the sparse inverse approach; `Solve,' the forward substitution $\Lmat\Gmat = \Amat\trans$ in the direct method; `PartInv,' computation of the Takahashi equations; `Interp,' the multiplication of $\Amat$ with the simulation ensemble for finding the empirical prediction standard error; and `Sim,' the time required to generate the $\{\wvec_i\}$ and perform the backwards substitution in conditional simulation to find the $\{\vvec_i\}$.

From the figure it is apparent that the direct method of computation performs satisfactorily for small state dimension ($n \le 10^3$) and for a small number of prediction locations ($N \le 10^3$) but as expected becomes intractable as $n$ and $N$ grow.  In the largest example considered ($n = 10^5$ and $N = 10^5$), direct computation required two orders of magnitude more computation time. The most costly operation in the direct method here is seen to be the forward solve $\Lmat\Gmat = \A^{\trans}$ which involves sparse (but still relatively dense) $10^5 \times 10^5$ matrices. This operation can be parallelised, and therefore an equivalent conclusion is that one would need to parallelise a hundred-fold to compute with the direct method what can be obtained serially in the same amount of time using the sparse inverse subset.  In all cases, conditional simulation using 50 simulations is seen to require roughly the same order of magnitude of amount of time as the method involving the sparse inverse subset. For larger state spaces, the bottleneck for the sparse inverse subset method is the sparse Cholesky decomposition, which is an unavoidable operation in all the methods we consider here.

Since $\Qmat$ is banded (with $b=2$), the conclusions from Section \ref{sec:complexity} can be readily observed in the results. First, for the direct method, `Solve' is dominating the computation for $N \ge 1000$. This cost is linear in both $n$ and $N$ as evident from Fig.~\ref{fig:Sim1}. Second, for the sparse inverse subset, when $N = 100000$ the Hadamard product dominates the computation for $n \le 10000$. The complexity of this computation, however, is independent of $n$, and remains roughly constant until the time required to compute the Cholesky decomposition dominates. 
Finally, when conditionally simulating, `Sim' is the bottleneck. The back substitution  here scales as $O(n)$ for fixed bandwidth, as can be seen from Fig.~\ref{fig:Sim1}. In summary, all methods scale as $O(n)$ as expected, but since the Cholesky decomposition dominates the time required in the sparse inverse subset approach for large $n$ and $N$, it outperforms the other two methods in this regime (since the decomposition is also required by the other two).

In this study we let the number of conditional simulations, $M = 50$. This seems reasonable, but was found to yield inaccurate estimates of the prediction standard errors. In the next section we consider the time-accuracy tradeoff of conditional simulation in more detail when comparing it to use of the sparse inverse subset.


\subsection{Computing prediction standard errors with \texttt{LatticeKrig} models}\label{sec:SST}

\texttt{LatticeKrig} is an \texttt{R} package designed for the modelling and prediction of very large datasets \citep{Nychka_2015}. It constructs a set of multi-resolution basis functions regularly distributed in the domain of interest, and models the conditional dependencies between the basis function coefficients $\etab$ using a sparse precision matrix. Due to sparsity, the total number of basis functions can be large, up to a few hundreds of thousands. Currently, conditional simulation is used to estimate prediction variances at arbitrary prediction locations; this is both time consuming and relatively inaccurate, as we show below.

In this example we consider sea-surface temperature (SST) data taken from the Visible Infrared Imaging Radiometer Suite (VIIRS) on board the Suomi National Polar-orbiting Partnership (Suomi NPP) weather satellite on October 14 2014 \citep{Cao_2013}. The VIIRS sensor reads sea-surface temperatures at very high fidelity, but the data is irregularly distributed spatially. \texttt{LatticeKrig} is ideally suited to supply predictions and prediction standard errors at high resolution from such a spatially-referenced large dataset.

For illustration, we consider a thinned sample of 27745 SST data points in a window $D$ spanning $165^\circ$W--$145^\circ$W and $10^\circ$S--$10^\circ$N. Since we are mostly interested in anomalies, we subtract the empirical mean to yield the data shown in Figure \ref{fig:SST}, left panel. We configured \texttt{LatticeKrig} to construct 26538 basis functions, $\phib(\cdot)$, across three resolutions, and once again assumed that there are no covariates, and that the data have point support. We then fitted the model using the anomalies and produced a high-resolution 1000 $\times$ 1000 gridded map of prediction standard errors in $D$ (i.e., $N = 10^6$).  Hence, once again, $\cL_k^O, k =1,\dots,m,$ and $\cL_j^P,j = 1,\dots,N,$ are identity operators. Fitting the model took 90 seconds, while generating 100 conditional simulations (serially) to generate the map of prediction standard errors required just over 17 minutes. 

The \texttt{LatticeKrig} model places, a priori, an independent GMRF on each of the resolutions, and hence $\Qmat$ is block diagonal. Each prediction location is, however, a linear combination of the basis functions across the different resolutions and unlike in Section \ref{sec:simstudy} Condition \eqref{eq:cor2} is not implicitly satisfied. Condition \eqref{eq:cor1} is also not satisfied as a result of there being large gaps in the data. We therefore compute $\ones(\Amat\trans\Amat)$ and add 1s where needed to $\ones(\Qmat)$ such that Corollary \ref{cor:conditions} (ii) is satisfied. The sparse inverse subset required slightly more time than 100 conditional simulations (nearly 20 minutes) but, more importantly, provided exact prediction standard errors, shown in Figure \ref{fig:SST}, centre panel. 

To see the improvement in accuracy over conditional simulation, we study how the relative error of the prediction-standard-error estimate using conditional simulation decreases with the number of conditional simulations, $M$. Specifically, we define the relative error as
$$
R_i(M) = \frac{\hat{\sigma}^{(M)}_{\textrm{cond},i} - \sigma_{\textrm{true},i}}{\sigma_{\textrm{true},i}},
$$
where $\hat{\sigma}^{(M)}_{\textrm{cond},i}$ is the sample prediction standard error estimated using $M$ conditional simulations, while $\sigma_{\textrm{true},i} \ldef \sqrt{d_i}$ is the exact prediction standard error. The summaries 
\begin{align*}
\hat{r}(M) &= \frac{1}{N}\sum_{i=1}^N R_i(M), \\
\hat{s}(M) &= \sqrt{\frac{1}{N-1}\sum_{i=1}^N \left(R_i(M) - \hat{r}(M)\right)^2},
\end{align*}
then reveal the bias and spread in the relative error, respectively. 

In Figure \ref{fig:SST}, right panel, we show plots for $\hat{r}(M)$ and $\hat{s}(M)$ for $M$ varying between 10 and 100 in intervals of 10 simulations. As expected the empirical mean of the relative error decreases with $M$, and is less than 1\% by $M=30$. However, we also observe that the empirical standard deviation of the relative error does not decrease as rapidly, and it is 10\% at $M=50$ and still around 7\% at $M = 100$. This empirical standard deviation decreases as $1/\sqrt{M}$, and in this scenario one would need to carry out several hundred more conditional simulations in order to bring this quantity down to more acceptable levels of, say, 1\%. The approach involving the sparse inverse subset is hence particularly attractive since it needs the same amount of time required to generate  around 120 conditional simulations.

\begin{figure}[!t]
	\begin{center}
		\includegraphics[width=\textwidth]{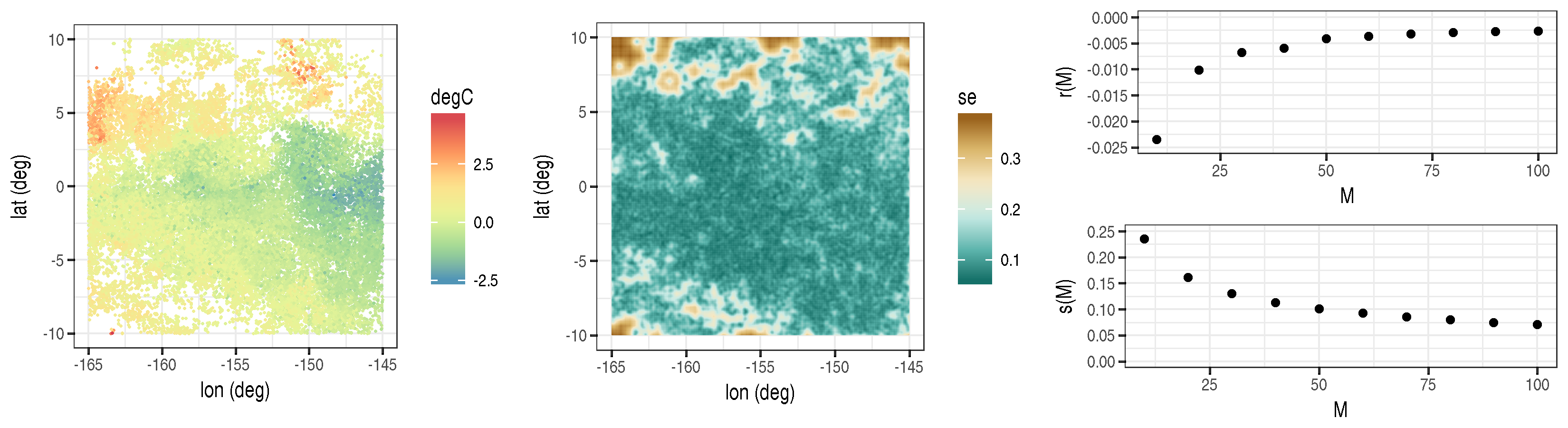}  
		\caption{Computing prediction variances using \texttt{LatticeKrig}. (Left panel) SST data from the VIIRS sensor on October 15 2014. (Centre panel) Prediction standard errors computed using the sparse inverse subset. (Right panel) The empirical mean of the relative error $\hat{r}(M)$ (top) and the empirical standard deviation of the relative error $\hat{s}(M)$ (bottom) when using $M$ conditional simulations from a \texttt{LatticeKrig} model with the SST dataset. } \label{fig:SST}
	\end{center} 
\end{figure}

\subsection{Statistical downscaling} \label{sec:downscaling}

Statistical downscaling is used to make inference on a process of interest on a support that is smaller than that of the measurement. In official statistics, this problem is classified as a branch `small-area estimation,' where one aims to infer statistics of sub-populations from surveys aggregated over large regions. The challenge in statistical downscaling is not so much the small-area prediction as the prediction standard errors that need to be attributed to the small area. 

The Australian Statistical Geography Standard considers a series of nested geographical areas in Australia known as Statistical Area Levels. At the lowest level, Mesh Blocks contain between 30 to 60 dwellings each, Statistical Area Level 1 (SA1) regions have an average population of 400 people each, and the Statistical Area Level 2 (SA2) regions have an average population of between 3000 and 25000 people each. We consider a region of New South Wales containing 78883 Mesh Blocks,  13830 SA1 regions, and 379 SA2 regions, and aim to infer the weekly mean family income ($\MFI$) of earners earning below $\$4000$ per week at the SA1 level just from data at the SA2 level collected in the Census of 2011. We also have $\MFI$ at the SA1 level, but we leave this out to validate the inferred prediction standard errors obtained using only the coarser resolution data. We do not consider SA2 regions where the MFI is based on a response of less than 100 individuals.

As basis functions we choose the Mesh Blocks, hence
$$
\phi_i(\svec) = \begin{cases} 
1 & \svec \in D_i \\
0 & \textrm{otherwise},
\end{cases}
$$
where $D_i \subset D$ is the spatial footprint of the $i$th Mesh Block and $\cup_i D_i = D$. We use a first-order CAR prior to model $\MFI$ at the Block Mesh level, where we can expect high spatial correlation between neighbouring blocks. Specifically, we let $\etab \sim \Gau(\zerob,\Qmat^{-1})$ where $\Qmat \ldef \tau \Dmat_w(\Imat -  \rho\widetilde\Wmat)$. The matrix $\Dmat_w$ is diagonal with $[\Dmat_w]_{jj}$ equal to the number of Mesh Blocks adjacent to the $j$th Mesh Block, while the matrix $\widetilde{\Wmat}$ is given by
$$
\widetilde{W}_{ij} = \begin{cases} 
[\Dmat^{-1}_w]_{jj} & j\sim i \\
0 & \textrm{otherwise},
\end{cases}
$$
\noindent where here $j \sim i$ indicates that the $j$th Mesh Block shares a common border with the $i$th Mesh Block.  More details on this representation can be found in \citet[Section 4.3]{Banerjee_2015}.

As covariates we used an intercept, and the proportion of people aged 15 years or over who are not in school that have completed Year 12 of study (as a proxy for education). Education data is available at both the SA1 level and the SA2 level. Since this is also required at the Mesh Block level when modelling, we assumed that the proportion of people that have reached Year 12 in a Mesh Block is the same as that at the corresponding SA1 level.  We modelled the measurement-error precision as $\Rmat = \sigma^{-2}\Vmat^{-1}$, where $\Vmat \ldef \diag(\{1/{\tilde{n}_k} : k = 1,\dots,m\})$ and where $\tilde{n}_k$ is the total number of respondents in SA2 area $k$.  This heteroscedastic model is based on the assumption that the standard error of the sample mean decreases as $1/\sqrt{\tilde{n}_k}, k = 1,\dots,m$; this is a reasonable assumption in practice \citep[e.g.,][]{Burden_2015}.

In this case study each datum has a footprint $D_k^O \subset D$. Therefore
\begin{equation}
\cL_k^O\cdot Y \ldef \frac{\int_{D^O_k}\psi(\svec)Y(\svec)\intd\svec}{\int_{D^O_k}\psi(\svec)\intd\svec},~k = 1,\dots,m, \label{eq:Ybar}
\end{equation}
where $\psi(\svec)$ is a spatial-weighting function, which is not equal to 1 since Mesh Blocks containing more residents have a bigger influence on the average at the area level than Mesh Blocks with a lower number of residents. The density function $\psi$ is a population density function in units of residents per unit area. To approximate the integrals in \eqref{eq:Ybar} we discretise by Mesh Block, to obtain
$$
\cL_k^O \cdot Y \approx \frac{\sum_{i=1}^n1^{(2)}_{ki}\psi_iY_i|D_i|}{\sum_{i=1}^n1^{(2)}_{ki}\psi_i|D_i|} =  \frac{\sum_{i=1}^nw_i1^{(2)}_{ki}(\xvec_i\trans\betab + \eta_i)}{\sum_{i=1}^nw_i1^{(2)}_{ki}},
$$
\noindent where the indicator function $1^{(2)}_{ki}$ is one if the $i$th Mesh Block is inside the $k$th SA2 region and zero otherwise, $Y_i$ is the MFI, $\psi_i$ is the population density, $|D_i|$ is the area, $w_i = \psi_i|D_i|$ is the resident population, $\xvec_i$ are the covariates (intercept and education), and $\eta_i$ is the basis function coefficient associated with the $i$th Mesh Block, respectively. Hence, in this case, $\Xmat^O \ldef \Bmat\Xmat$ where $\Xmat$ are the covariates at the Mesh Block level, $B_{ki} = w_{i}1^{(2)}_{ki} / \sum_{i=1}^n w_{i}1^{(2)}_{ki}$ and $w_{i}$ is the estimated resident population in the $i$th Mesh Block. The process ($\MFI$) at each Mesh Block is therefore mapped to the $\MFI$ at the SA2 level through the model
$$
\cL_k^O\cdot Y = \beta_1 + \sum_{i=1}^nB_{ki}(\beta_2x_i + \eta_i),
$$
\noindent where $x_i$ is the proportion of people aged 15 years or over in the $i$th Mesh Block who are not in school and that have completed Year 12 of study.

The parameters $\thetab \ldef (\beta_1,\beta_2, \rho,\tau,\sigma^2)\trans$ were estimated using maximum likelihood. To predict at the SA1 level we constructed the matrix $\Amat$ in a similar way to $\Bmat$; specifically we set $A_{ji} = w_{i}1^{(1)}_{ji} / \sum_{i} w_{i}1^{(1)}_{ji}$ where now  $1^{(1)}_{ji}$ is one if the $i$th Mesh Block is inside the $j$th SA1 region and zero otherwise. Prediction over $\Yvec$ proceeds by computing $\hat\Yvec$ and $\dvec$. It is straightforward to see in this case that due to nesting of the Mesh Blocks within SA1 regions, and of the SA1 regions within the SA2 regions, Condition \eqref{eq:cor1} is satisfied. Therefore, the sparse inverse subset can be readily used in the computation of the prediction standard errors.

In Figure \ref{fig:SA2}, left panel, we show the MFI data at the SA2 level; in Figure \ref{fig:SA2}, centre panel, we show the prediction standard error at the SA1 level; and in Figure \ref{fig:SA2}, right panel, we plot the proportion of validation data, $p_2$, that lie in the lower $p_1$ quantiles of the prediction distributions at the SA1 level. This plot demonstrates practically perfect uncertainty quantification, and hence downscaling validity. Since this problem is relatively low-dimensional, the performance between the direct method and the method using the sparse inverse subset was not drastic, their requiring only 20s and 10s respectively for the computation of the prediction variances. As demonstrated in the simulation experiments we can expect larger dividends when modelling over larger geographical regions.

\begin{figure}[!t]
	\begin{center}
		\includegraphics[width=0.36\textwidth]{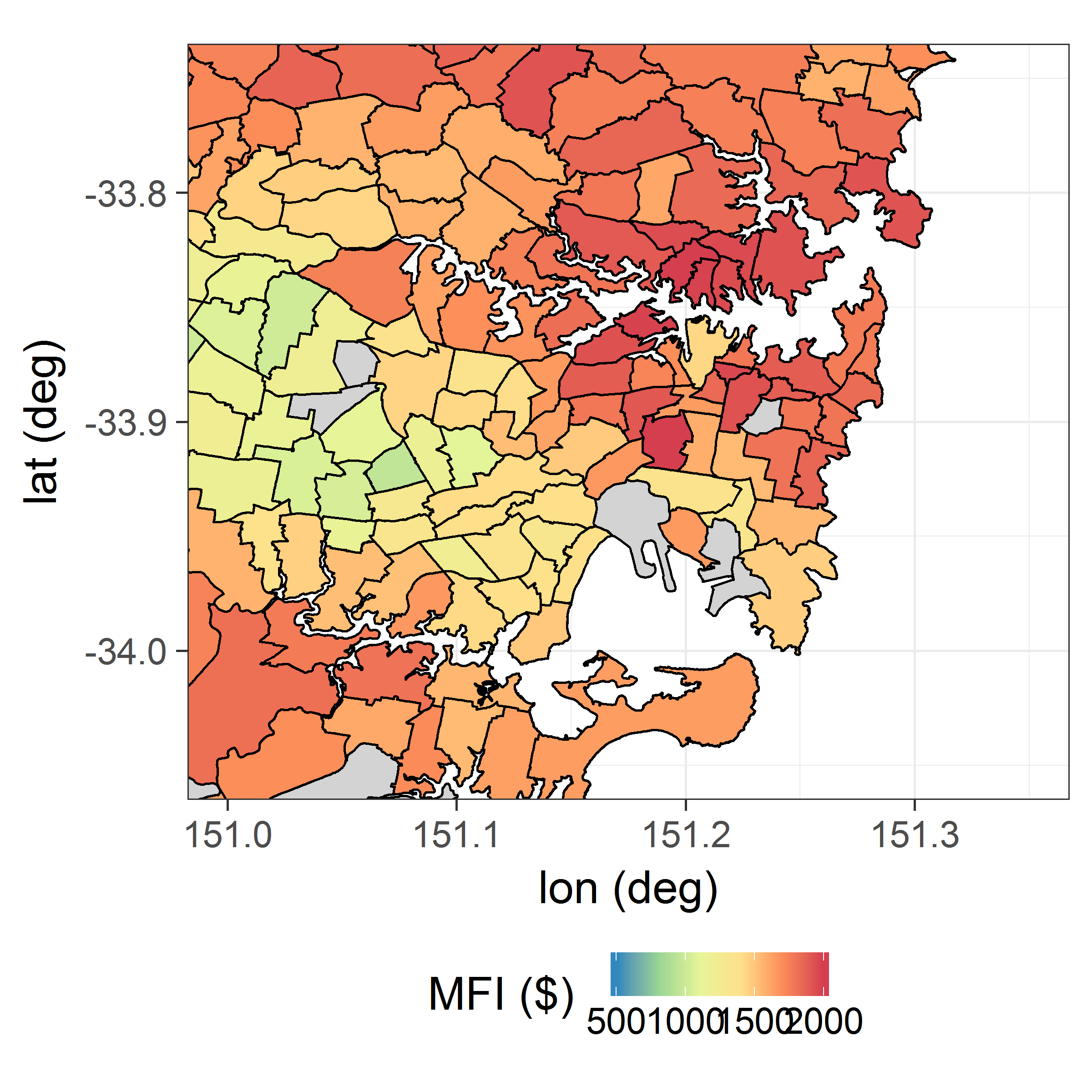}  
		\includegraphics[width=0.36\textwidth]{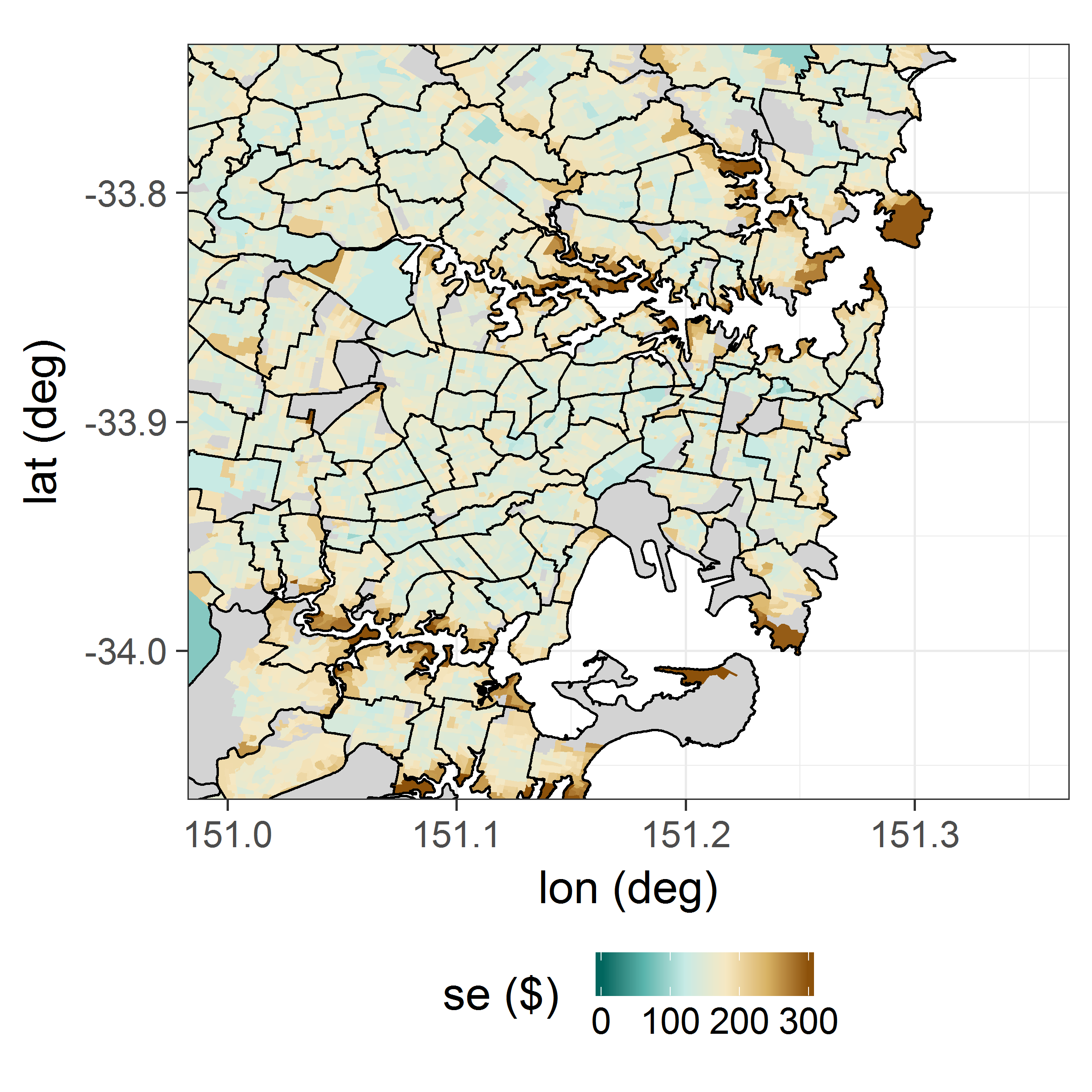}  
		\includegraphics[width=0.26\textwidth]{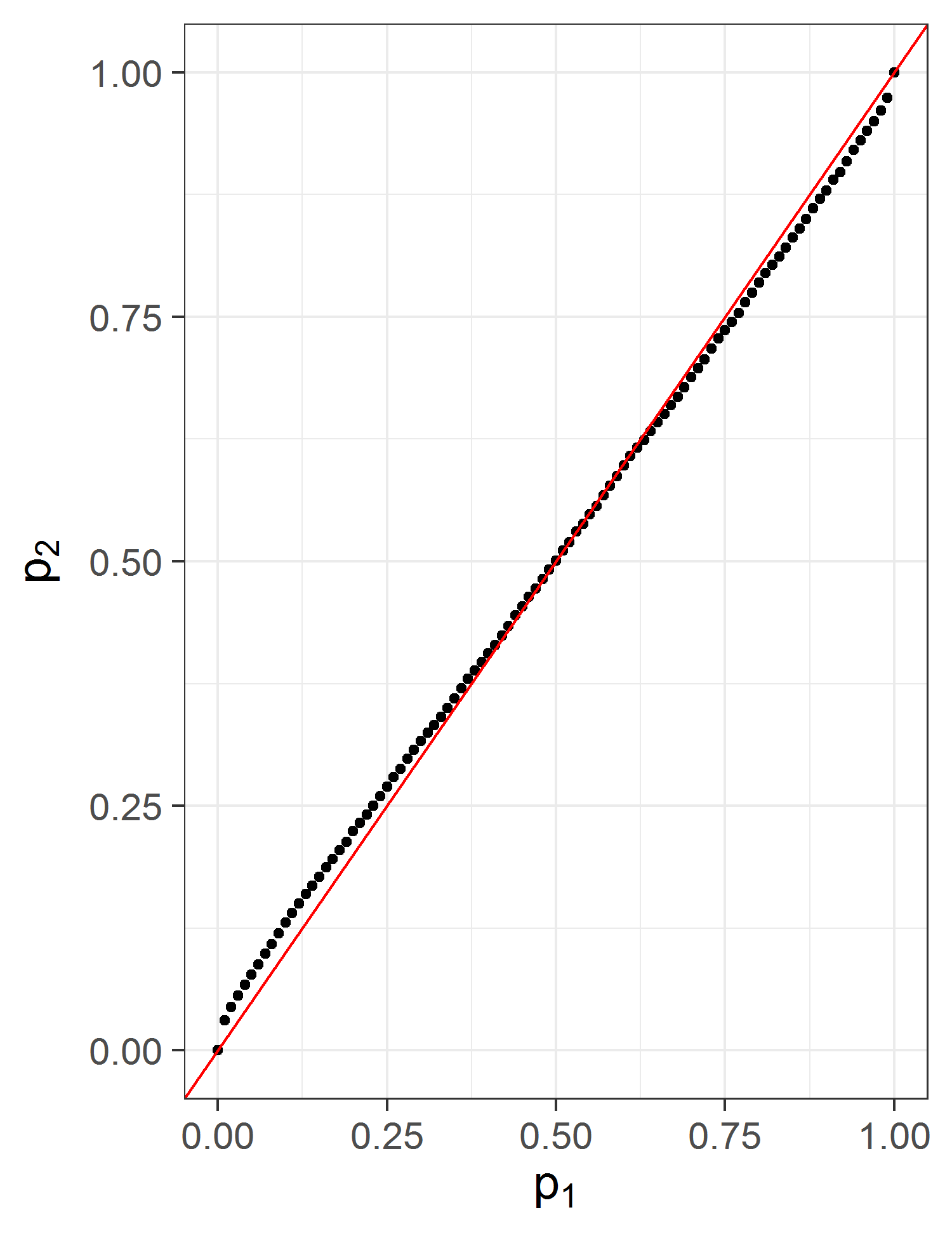}  
		\caption{Statistical downscaling of $\MFI$. (Left panel) Data observed at the Statistical Area 2 (SA2) level. Grey regions denote regions omitted because of paucity in the number of respondents. (Centre panel) Prediction standard errors at the SA1 level. The black borders denote the SA2 levels, while grey areas denote SA1 regions that contain Mesh Blocks without any residents (e.g., airports, parks, etc.). (Right panel) Proportion of validation data points $p_2$ in the lower $p_1$ quantile of the prediction distributions at the SA1 level. The red line denotes the ideal case when $p_1 = p_2$.} \label{fig:SA2}
	\end{center}
\end{figure}

\subsection{Fixed rank kriging with a CAR model for the fine-scale variation}

Fixed rank kriging (FRK) is an optimal spatial prediction methodology that treats the process as a sum of basis functions, the weights of which are modelled using an $n \times n$ covariance matrix $\Kmat$. By virtue of the Sherman--Woodbury--Morrison formula, all estimation and prediction equations in FRK only involve inverses of matrices of size $n\times n$, where $n$ is generally much less than $m$. This allows optimal prediction using large datasets; see \citet{Cressie_2008} for details, and \citet{FRK} for a software implementation.

A drawback of FRK is that the number of basis functions $n$ has to be relatively small, and limited to a few thousand, due to the (dense) matrix inversion required of $\Kmat$ at various stages in estimation and prediction. To cater for this small number, extra variability is captured in FRK through use of a fine-scale variation term $\xi(\cdot)$ which is generally taken to spatially uncorrelated. A more realistic model is one that assumes that $\xi(\cdot)$ has got fine-scale correlations. This could, for example, be included as a CAR model on a fine discretisation of the domain of interest.

Assume, for simplicity, that the CAR model is defined over our prediction grid (this need not be the case, in general). Then the FRK--CAR model for the process evaluated over the elements of the grid is given by
\begin{equation}\label{eq:FRK-CAR}
\Yvec = \Xmat^P\betab + \tilde\Amat\tilde\etab + \xib,
\end{equation}
where $\tilde\etab \sim \Gau(\zerob,\Kmat)$ and $\xib \sim \Gau(\zerob,\Qmat_\xi^{-1})$, where $\Qmat_\xi$ is the precision matrix of the CAR model. Prediction of $\Yvec$ when $\xib$ is spatially uncorrelated is straightforward \citep[e.g., ][]{Katzfuss_2011}, since in that case $\xib$ is independent of the data $\Zvec$ at all unobserved locations. When $\xib$ is spatially correlated this is no longer the case, and $\tilde\etab$ and $\xib$ need to be considered jointly. Let $\etab \ldef (\tilde\etab\trans,\xib\trans)\trans$ and $\Amat \ldef (\tilde \Amat,\Imat)$. Then \eqref{eq:FRK-CAR} can be re-written as $\Yvec = \Xmat^P\betab + \Amat\etab$. As usual, the data $\Zvec = \Xvec^O\betab + \Bmat\etab + \epsilonb$ where, similar to $\Amat$, $\Bmat = (\tilde \Bmat, \Imat)$ is partitioned into two matrices corresponding to $\tilde \etab$ and $\xib$, respectively. In practice we do not know the parameters $\betab$ and those appearing in the matrices $\Rmat, \Kmat$ and $\Qmat_\xi$. In big-data situations it is reasonable to estimate these using maximum likelihood prior to carrying out prediction \citep[e.g.,][]{FRK,Ma_2017}.  

Once these matrices are constructed, everything proceeds as in the earlier examples. For prediction we have that $\dvec = \diag(\Amat \Smat\Amat\trans)$ where $\Smat \ldef (\Bmat\trans \Rmat \Bmat + \Qmat)^{-1}$ where the block-diagonal matrix $\Qmat \ldef \textrm{bdiag}(\Kmat^{-1},\Qmat_\xi)$, and where $\textrm{bdiag}(\cdot)$ returns a block-diagonal matrix of its matrix arguments. We let $\Qmat_\xi = \tau\Dmat_w(\Imat - \rho\widetilde\Wmat)$ as in Section \ref{sec:downscaling}, although any positive-definite sparse precision matrix could be used. 

Before using the sparse inverse subset to compute $\dvec$ we first need to make sure that the conditions of Corollary \ref{cor:conditions} are satisfied. In this example,
$$\Amat\trans\Amat = \begin{bmatrix} \tilde{\Amat}\trans\tilde{\Amat} & \tilde{\Amat}\trans \\ \tilde\Amat & \Imat \end{bmatrix}; \quad \Bmat\trans\Bmat = \begin{bmatrix} \tilde{\Bmat}\trans\tilde{\Bmat} & \tilde{\Bmat}\trans \\ \tilde\Bmat & \Imat \end{bmatrix}; \quad \Qmat = \begin{bmatrix} \Kmat^{-1} & \zerob \\ \zerob & \Qmat_\xi \end{bmatrix}.$$
Now, data typically used in FRK is large and typically irregular in space, so that, for the same reasons as in Section \ref{sec:SST}, it is unlikely that $\ones(\Bmat\trans\Bmat) \ge \ones(\Amat\trans\Amat)$. The matrix $\Kmat$ on the other hand, is low dimensional and typically assumed to be dense, therefore trivially $\ones(\Kmat^{-1}) \ge \ones(\tilde\Amat\trans\tilde\Amat)$. Also trivially, $\ones(\Qmat_\xi) \ge \ones(\Imat)$. Therefore, in order to ensure that Condition \eqref{eq:cor2} holds we simply need to force zeroes in the lower off-diagonal block and the upper off-diagonal block of $\ones(\Qmat)$  to have 1s where $\tilde{\Amat}$ and $\tilde{\Amat}\trans$ are nonzero, respectively. Once these 1s are inserted, the symbolic Cholesky factor will have the required sparsity structure to satisfy Corollary \ref{cor:conditions} (ii) and the sparse inverse subset can be used to compute $\dvec$.

\begin{figure}[!t]
	\begin{center}
		\includegraphics[width=0.78\textwidth]{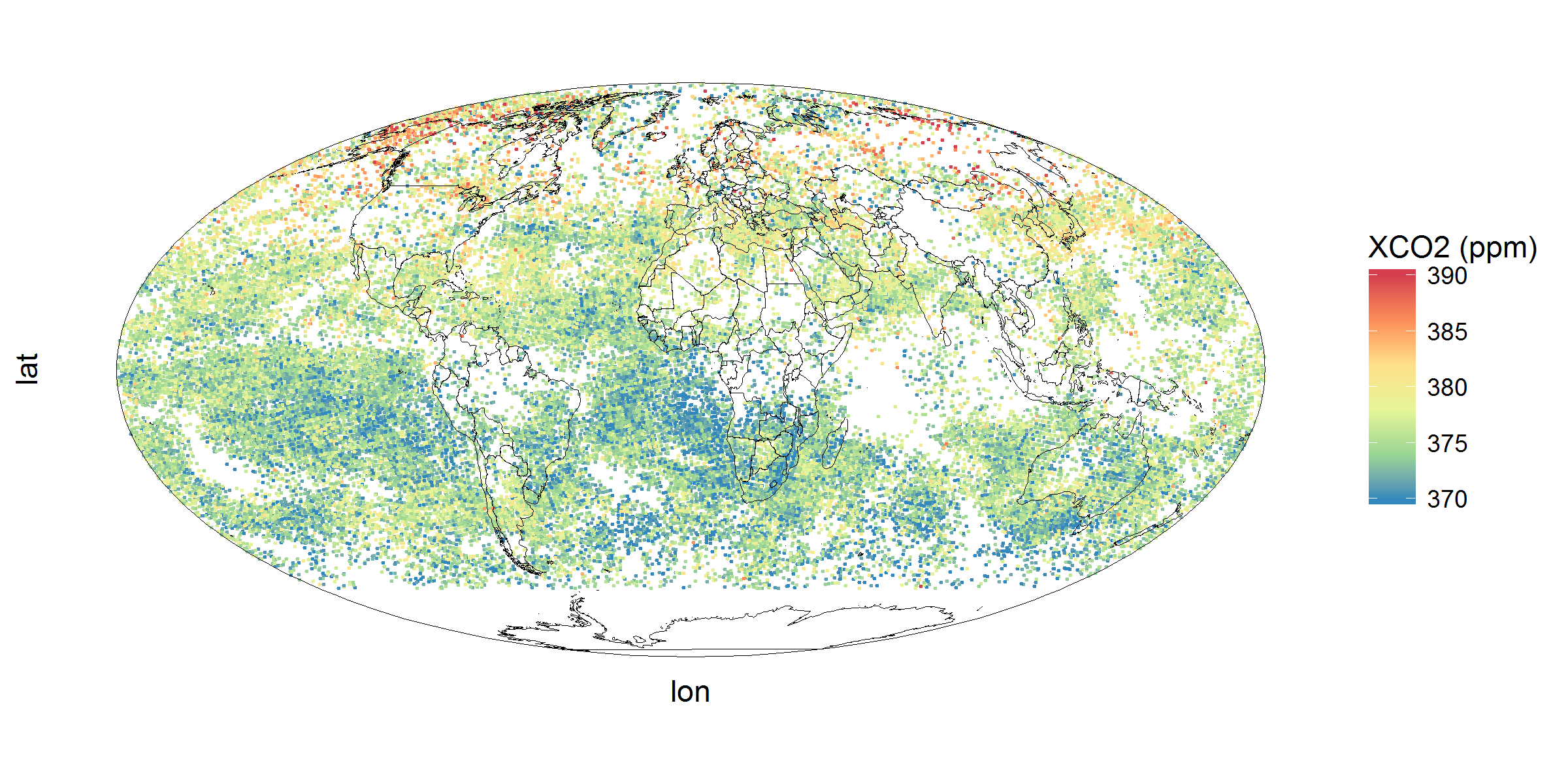}  
		\includegraphics[width=0.78\textwidth]{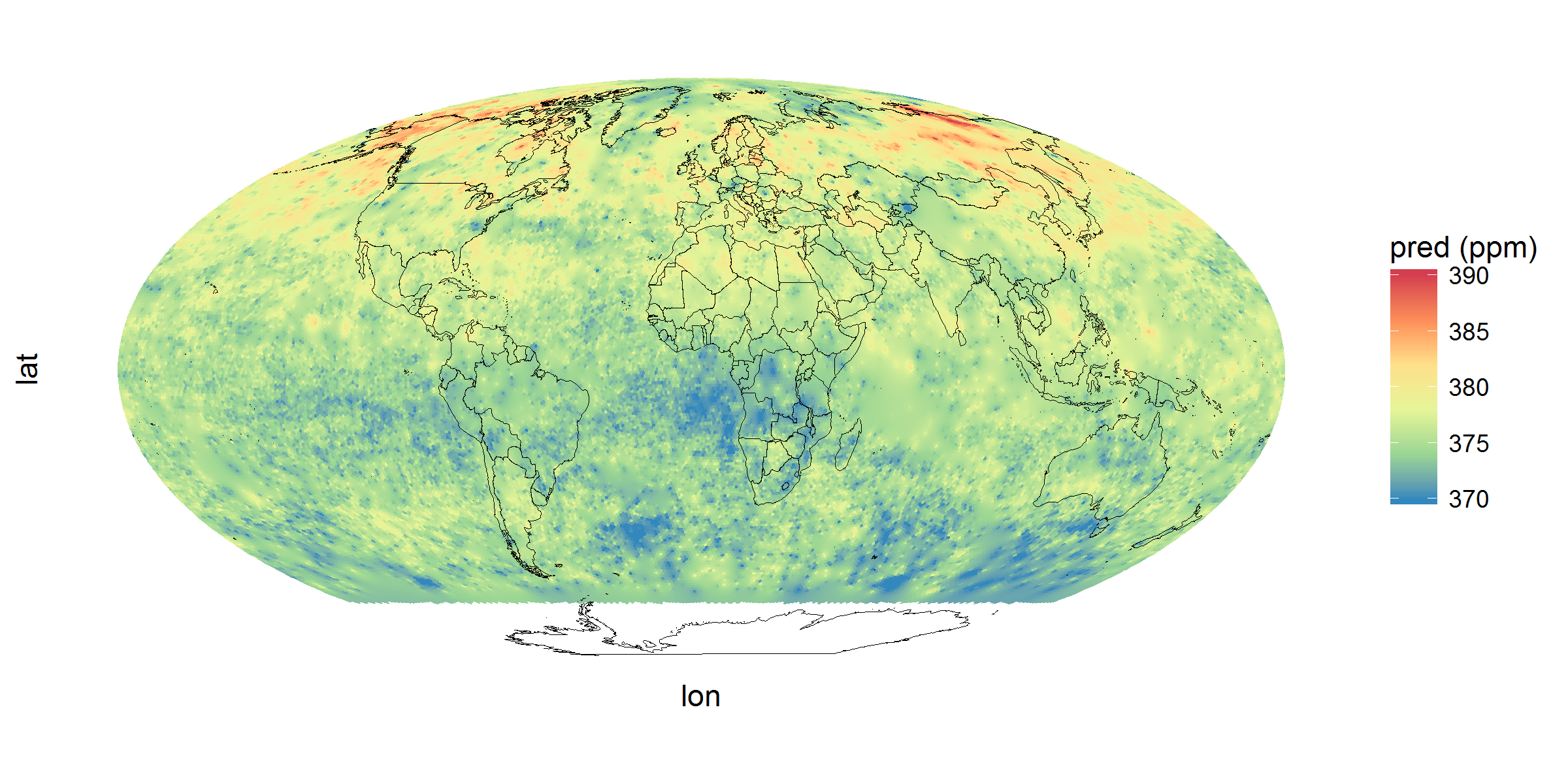}  
		\includegraphics[width=0.78\textwidth]{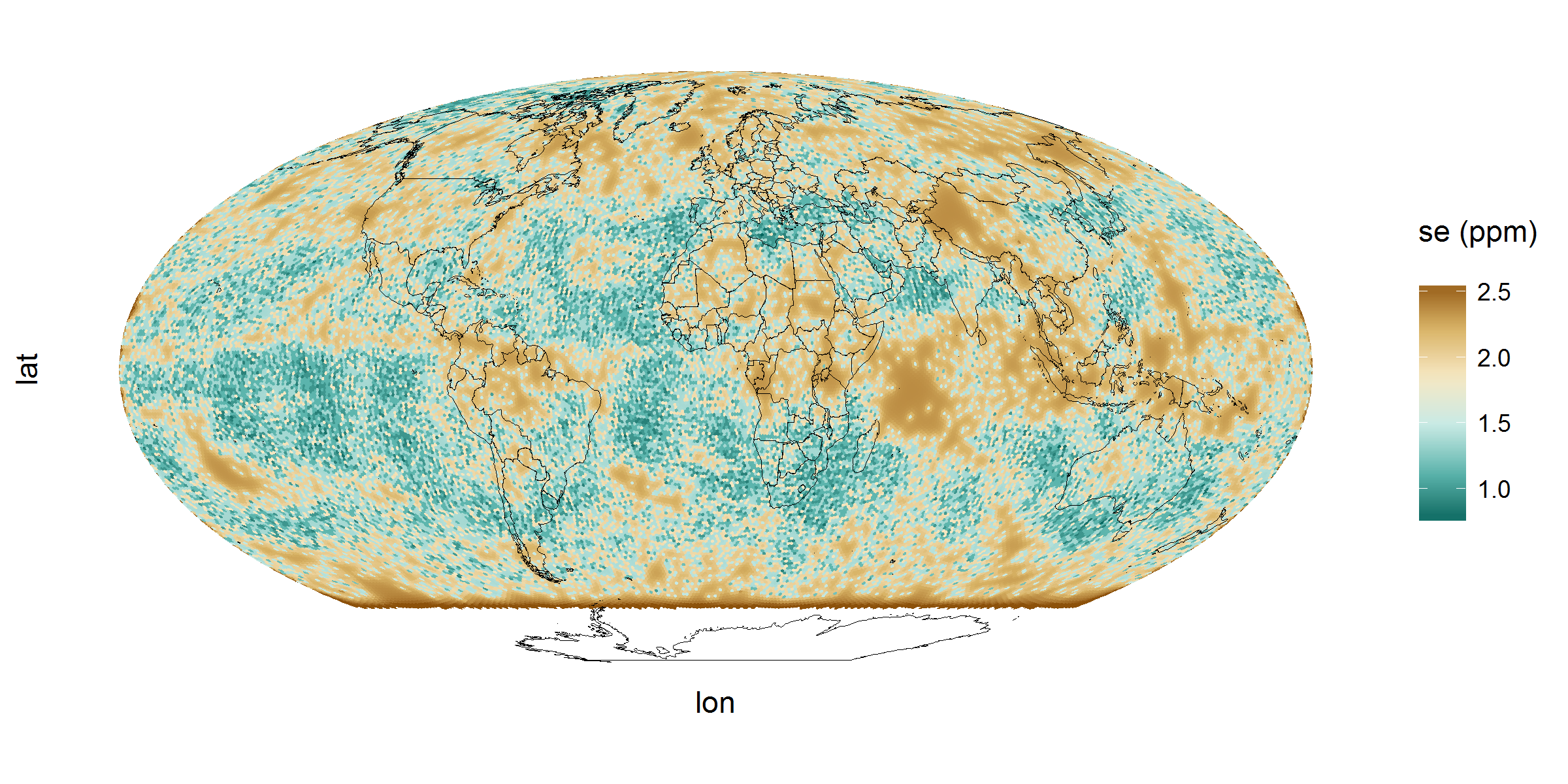}  
		\caption{(Top panel) XCO$_2$ data in ppm from the AIRS instrument between May 01 2003 and May 03 2003 (inclusive). (Centre panel) Prediction of $\Yvec$ in ppm using an FRK-CAR model. (Bottom panel) Prediction standard error of $\Yvec$ in ppm using an FRK-CAR model. Note that AIRS does not release data below 60$^\circ$S.} \label{fig:FRK}
	\end{center}
\end{figure}

We demonstrate the use of the FRK-CAR model on 43059 column-averaged carbon dioxide retrievals (XCO$_2$) generated by the Atmospheric Infrared Sounder (AIRS) on board the Aqua satellite between May 01 2004 and May 03 2004 inclusive. We constructed the prediction grid using an aperture 3 hexagonal discrete global grid, consisting of 62424 equal-area polygons on the sphere. As basis functions we used three resolutions of 1176 bisquare basis functions arranged regularly on the sphere. We used an intercept and the latitude coordinate as covariates to construct $\Xmat^P$ and subsequently fitted the model by estimating $\betab$, $\tau$, $\rho$, and the parameters appearing in $\Kmat$ using standard maximum-likelihood. The standard deviation of the measurement error was fixed to 2 ppm.

In Figure \ref{fig:FRK}, top panel, we show the data, in the centre panel we show the predictions, and in the bottom panel we show the resulting prediction standard errors computed using the sparse inverse subset after padding $\ones(\Qmat)$ with 1s as described above. Note the fine-scale structure now apparent in the prediction standard errors as a result of incorporating a CAR model in FRK; such structure is usually not present with standard FRK models. In this example, following the Cholesky decomposition of $\Pmat$, the direct method for computing $\dvec$ required over two minutes whereas that using the sparse inverse subset required just under 20 seconds.

\section{Conclusion}\label{sec:conc}

In this article we explore the use of the sparse inverse subset obtained using the Takahashi equations for computing prediction standard errors of linear combinations of $\etab$, where $\etab$ is a Gaussian Markov random field. We derive sufficient conditions that depend on the sparsity structure of the matrices involved, and carry out an extensive simulation study illustrating that this approach is feasible in high dimensional problems when a large number of linear combinations are sought. We specifically focus on case studies in spatial statistics, where the spatial process is decomposed using a sum of basis functions, to demonstrate its utility; it is here where we envision this approach to find most use. In general, we find that the method using the sparse inverse subset considerably outperforms conditional simulation, and provides the required prediction standard errors where direct methods would require a high degree of parallelisation.

In this work we have largely neglected variability introduced through the fixed effects. This was intentional: The problems we considered have sufficiently large data sets that any variability introduced through estimating the fixed effects is negligible, and the sparse subset should not be used when this is deemed to pose a problem. One may recast the fixed effects into random effects, thereby treating them as basis functions in the spatial process. However, doing so would destroy the sparsity in $\Pmat$, nullifying any potential advantage of utilising the sparse inverse subset. 

All methods considered in this article require the computation of the Cholesky factor. Of course, in high-dimensional systems where $n$ ventures into the tens of millions, this will not be possible. Approximate methods are required in this scenario; these could be based on model approximations \cite[e.g.,][]{Katzfuss_2017}, likelihood approximations \citep{Eidsvik_2014}, or sample approximations \citep{Simpson_2013}. Recently \citet{Siden_2017}, showed how to approximate covariances by using the Takahashi equations on connected subsets of the GMRF (and hence by computing the Cholesky factor of several, smaller GMRFs) using a Rao-Blackwellised sampling scheme. However, the majority of applications do not require the consideration of such a high latent dimension. Further, the possibility for the practitioner to not have specialist equipment, such as graphical processing units or a high-performance environment, is attractive. The latter, if available, are likely to render conditional simulation and the use of the direct method once again feasible.

\section*{Acknowledgements}

We gratefully acknowledge Botond Cseke for discussions on sparse inverse subsets, Noel Cressie for discussions on Census data, and Yulija Marchetti for supplying the SST data. Data at the Mesh Block, SA1, and SA2 levels used in the study of Section \ref{sec:downscaling} were obtained from the Australian Bureau of Statistics web pages \citep{ABS_MBs,ABS_shapefiles,ABS_DataPacks}.  This research did not receive any specific grant from funding agencies in the public, commercial, or not-for-profit sectors.

\section*{References}

\bibliography{Spat_bib}
\end{document}